 \documentclass[onecolumn,journal]{IEEEtran} 
 \pdfoutput=1
 
 \IEEEoverridecommandlockouts

\pagestyle{headings}
\usepackage{amsfonts,mathrsfs}
\usepackage{amssymb}
\usepackage{url,color}
\usepackage{subfigure}
\usepackage{algpseudocode}
\usepackage{algorithm}
\usepackage{makeidx}         
\usepackage{graphicx}

%

\usepackage[top=1.2in, bottom=1.2in, left=1.5in, right=1.5in]{geometry}

%
\usepackage{cite}

%
\ifCLASSINFOpdf
\else
\fi
%
%

%
\usepackage[cmex10]{amsmath}

\usepackage{cases}
\usepackage{url}


\hyphenation{op-tical net-works semi-conduc-tor}

\begin{document}
%
\title{Achieving the Optimal Steaming Capacity and Delay Using  Random Regular Digraphs in P2P Networks\thanks{The research reported here was funded by NSF grant CNS-0964081 and AFOSR MURI FA 9550-10-1-0573.}}

\author{Joohwan Kim and R. Srikant\\
Coordinated Science Lab and\\
Department of Electrical and Computer Engineering\\
University of Illinois at Urbana-Champaign\\
joohwan@illinois.edu; rsrikant@illinois.edu}



%
\author{\IEEEauthorblockN{Joohwan Kim\IEEEauthorrefmark{1} and
R. Srikant\IEEEauthorrefmark{2}}\\
\IEEEauthorblockA{Coordinated Science Lab and\\
Dept. of Electrical and Computer Engineering\\
University of Illinois at Urbana-Champaign\\
Email: \IEEEauthorrefmark{1}joohwan@illinois.edu; \IEEEauthorrefmark{2}rsrikant@illinois.edu}}


\maketitle

\newtheorem{definition}{Definition}{\bf}{}
\newtheorem{corollary}{Corollary}{\bf}{}
\newtheorem{proposition}{Proposition}{\bf}{}
\newtheorem{lemma}{Lemma}{\bf}{}
\newtheorem{theorem}{Theorem}{\bf}{}
\newtheorem{remark}{Remark}{\bf}{}
\newtheorem{assumption}{Assumption}{\bf}{}

\def\re{\mathbb{R}}
\def\inc{{\rm inc}}

\def\an#1{{\color{red}#1}}

\newcommand{\x}[2]{S_{#2}^{(#1)}}
\newcommand{\z}[2]{Z_{#2}^{(#1)}}

\newcommand{\cA}[1]{\mathcal{A}_{#1}}
\newcommand{\cB}[1]{\mathcal{B}_{#1}}

\newcommand{\setN}[1]{\mathcal{N}_{#1}}
\newcommand{\setS}[1]{\mathcal{S}_{#1}}
\newcommand{\SC}[1]{S_{>#1}}
\newcommand{\Sd}[1]{S_{#1}}
\newcommand{\Shone}{S_{h+1}}
\newcommand{\Sh}{S_h}

\newcommand{\N}[1]{N_{#1}}
\newcommand{\NC}[1]{N_{> #1}}
\newcommand{\Nd}{N_{d}}
\newcommand{\Ndone}{N_{d+1}}

\newcommand{\comb}[2]{{#1 \choose #2}}

\newcommand{\dmin}[1]{\delta_{\min, #1}(\epsilon)}
\newcommand{\dmax}[1]{\delta_{\max, #1}(\epsilon)}
\newcommand\prob[1]{P\left[#1 \right]}
\newcommand{\NN}{\nonumber}
\newcommand{\odd}{\text{odd}}
\newcommand{\even}{\text{even}}
\newcommand{\cH}{\mathcal{H}}
\newcommand{\cEodd}{\mathcal{E}_{\odd}}
\newcommand{\cEeven}{\mathcal{E}_{\even}}
\newcommand{\cVodd}{\mathcal{V}_{\odd}}
\newcommand{\cVeven}{\mathcal{V}_{\even}}

\newcommand{\cG}{\mathcal{G}}
\newcommand{\cV}{\mathcal{V}}
\newcommand{\setX}{\mathcal{X}}
\newcommand{\setY}{\mathcal{Y}}
\newcommand{\setR}{\mathcal{R}}

\newcommand{\tblue}{\textcolor{black}}
\newcommand{\tred}[1]{}

\newcommand{\tech}{\textcolor{black} }

\newcommand{\jour}{}

\newlength{\mylength}
\newlength{\mylengthNarrow}
\newlength{\mylengthVeryNarrow}

\setlength{\mylength}{5in}
\setlength{\mylengthNarrow}{4in}
\setlength{\mylengthVeryNarrow}{3in}


\begin{abstract}
In earlier work, we showed that it is possible to achieve $O(\log N)$ streaming delay with high probability in a peer-to-peer network, where each peer has as little as four neighbors, while achieving any arbitrary fraction of the maximum possible streaming rate. However, the constant in the $O(log N)$ delay term becomes rather large as we get closer to the maximum streaming rate. In this paper, we design an alternative pairing and chunk dissemination algorithm that allows us to transmit at the maximum streaming rate while ensuring that all, but a negligible fraction of the peers, receive the data stream with $O(\log N)$ delay with high probability. The result is established by examining the properties of graph formed by the union of two or more random 1-regular digraphs, i.e., directed graphs in which each node has an incoming and an outgoing node degree both equal to one.
\end{abstract}


%
\IEEEpeerreviewmaketitle

\section{Introduction}\label{sec: intro}

Consider $N$ nodes in a network with no pre-defined edges. Draw a directed Hamiltonian cycle among these nodes, i.e., a directed cycle which uses all nodes, where the Hamiltonian cycle is picked uniformly at random from the set of all possible Hamiltonian cycles. Now draw another random directed Hamilton cycle through the same set of nodes. Even though each cycle has a diameter of $N,$ the surprising result in \cite{Kim01} states that the graph formed by the union of the two cycles (which we will call the \emph{superposed graph}) has a diameter of $O(\log N)$ with high probability. In \cite{Law03}, it was shown that random Hamiltonian cycles can be formed among peers (now the peers are the nodes mentioned above) in a peer-to-peer (P2P) network by exploiting the natural churn (i.e., arrivals and departures of peers) in the network. Next, suppose that each edge is assumed to have a unit capacity  and one of the nodes in the superposed graph is designated as a source node, it is not difficult to show that the cut capacity of the superposed graph is 2, i.e., the minimum capacity of any cut that separates the source from some of the other peers is equal to 2. A fundamental result in graph theory
\cite{Chu65,Edmonds1972,Tarjan77} asserts that the cut capacity can be achieved by constructing 2 edge-disjoint arborescences (for readers unfamiliar with the term, an arborescence is a directed spanning tree) and transmitting data over each arborescence at the unit rate, such that the total rate is equal to the cut capacity. However, how to construct such arborescences in a distributed fashion is an open question in general. Further, if the superposed graph is used to transmit real-time data in a P2P network, then delay becomes an important consideration, but delay is not addressed in the papers mentioned in this paragraph.

In \cite{Kim13TiT}, we presented an algorithm which achieves any arbitrary fraction of the cut capacity of the superposed graph while the maximum delay in the graph scales as $O(\log N)$ with high probability. The result is established by considering the following model of a random graph: consider the superposed random graph mentioned previously and suppose that each edge in the second cycle  is removed independently with  probability $1-1/K$, the remaining graph continues to have a diameter of $O(\log N/\log (1+2/K))$ with high probability. Based on this theoretical result, we showed in \cite{Kim13TiT} that  the proposed algorithm can achieve a $1-1/K$ fraction of the optimal streaming rate with $O(\log N/\log (1+2/K))$ delay. A caveat with this result is that the delay increases without bound as we decreases $q$ to get a near-optimal streaming rate. Hence, we have to compromise a fraction of the optimal streaming capacity for $O(\log N)$ delay.  Practical implementation \cite{Kim13tech3} indicates that this fraction can be significant to achieve reasonable delays.

In this paper, we address the above shortcoming of the previous algorithm and present a new algorithm which achieves the maximum streaming capacity. Our new algorithm is able to stream data to all peers, with the possible exception of an $o(1)$ fraction of them, at the maximum possible rate with $O(\log N)$ delay. In particular, the algorithm does not need to compromise any fraction of the optimal streaming rate for the $O(\log N)$ delay. Another way to view the result is that our algorithm constructs edge-disjoint arborescences   in a distributed manner to achieve the cut capacity of the superposed graph, to practically all the nodes in the network, in the limit as $N$ goes to infinity. Our algorithm is as simple as the previous one and continues to be robust to peer churn.

We now briefly comment on the relationship between our work and other prior work in the literature, primarily concentrating on those papers that establish fundamental performance bounds. The capacity of real-time, streaming node-capacitated P2P networks was studied in \cite{Kumar07,Mundinger08}. Structured approaches to achieving this capacity have been studied in \cite{Li05,Liu08,Liu10}, where typically multiple trees are constructed to carry the streaming traffic. An important aspect of this style of research is to design algorithms to maintain the tree structure even in the presence of peer arrivals and departures. Additionally, in this literature, trees are often constructed to have small depths to reduce streaming delay, while achieving full streaming capacity. We note that churn is not a problem for us since the Hamiltonian cycle construction takes advantage of peer churn \cite{Law03,Kim13TiT}. In comparison with prior work on structured networks, the main contribution of our paper is to establish the capacity and delay result for a network which handles churn naturally. The other style of research is to study unstructured P2P networks \cite{Frieze1985,Massoulie07, Sanghavi07,Bonald08}. The main idea in these papers is to consider gossip-style data dissemination in fully-connected networks and show that either the full capacity or a fraction of the capacity is achieved with small delay. In contrast, in our model, we show that the network can achieve full capacity and small delay even when each peer has only four neighbors (two upstream and two downstream). The advantage of having small neighborhood size is due to that fact that storing and updating neighborhood information in P2P networks incurs a significant overhead. An exception to the fully-connected network model in prior work can be found in \cite{Zhao11}; however, only a capacity result is established in this paper but no algorithm is provided to achieve the capacity, and furthermore, a delay bound is not provided either. Motivated by the widespread use of CoolStreaming \cite{Zhang05coolstreaming}, delay bounds based on mean-field-type approximations have been studied in \cite{Zhou07,Shakkottai11}, but the mean-field limit obscures network construction details in these models and hence are not suitable to study network topologies. In addition to the theoretical results mentioned here, a number of other algorithms have been proposed and studied using simulations. We refer the reader to \cite{Liu07opportunitiesand,Dongni08,Dan09,Traverso12} and references within. The rest of the paper is organized as follows. We first present our network model and main result, and provide some intuition behind the main result. Following that, we present the key lemmas and theorems that lead up to the main result, relegating the proofs to the appendix. We wrap up the paper with concluding remarks, briefly commenting on ongoing work to convert the theory here to a practical implementation. This paper focuses only on the theory behind our design. A number of practical issues must be addressed in a real implementation. We have implemented our P2P algorithm in a testbed, the details of which can be found in \cite{Kim13tech3}.

\section{System Model}

The streaming network is assumed to consist of a source peer that generates chunks from a real-time video stream and regular peers that  receive the chunks to play the video.  The network operation is controlled
by two algorithms: the peer-pairing algorithm and the chunk-dissemination algorithm. The former is for constructing and maintaining the network topology when peers join and leave frequently. The latter is for disseminating chunks generated at the source to all peers over the topology maintained by the former algorithm.

\subsection{Peer-Pairing Algorithm}

The peer-pairing algorithm constructs $M$ independent 1-regular digraphs. The topology of the P2P system is the union of these $M$ digraphs. For the theoretical results in the paper, $M$ can be as little as two, but can be larger for robustness reasons in a practical implementation. Since every peer has one  incoming edge and one outgoing edge in a 1-regular digraph, it has $M$ incoming edges and $M$ outgoing edges, and thus the resulting topology is a $M$-regular digraph. We index the $M$ incoming edges and the $M$ outgoing edges from 1 to $M$, respectively. We call the peer at the other end of the $m$-th incoming edge of peer $v$  \emph{the $m$-th parent} of peer $v$ and denote by $p_m(v)$. Similarly, we call the peer in the other  end of the $m$-th outgoing edge of peer $v$ \emph{the $m$-th child} and denote by $c_m(v)$. Peer $v$ receives chunks from the parents through its incoming edges and send chunks to the  children through its outgoing edges.

\begin{algorithm}
\caption{Peer-Pairing Algorithm}\label{peerAlgo}
\begin{enumerate}
\item Initially, the source (peer 1) creates $M$ loops, i.e.,
$p_m(1)=c_m(1)=1$ for all $m$. The set $V$ of peers is given by $\{1\}$.

\item When a new peer $v$ joins the network, the  set $V$ of existing peers is updated as
\begin{equation}
V\leftarrow V\cup\{v\}.\NN
\end{equation}
$M$ peers $w_1,\cdots, w_M$ are chosen from $V$ allowing repetition. For each $m$, if $w_m=v$, peer $v$ creates a loop, i.e., $p_m(v)=v$ and $c_m(v)=v$. If $w_m\neq v$, peer $v$ contacts peer $w_m$ and breaks into the $m$-th outgoing edge $(w_m, w_m')$ where $w_m'=c_m(w_m)$ as follows:
$$p_m(v)\leftarrow w_m, c_m(v)\leftarrow w_m',$$
$$c_m(w_m)\leftarrow v, p_m( w_m' ) \leftarrow  v.$$
As a result, existing  edge $(w_m, w_m')$ is replaced with $(w_m, v)$ and $(v, w_m')$ for each $m$.

\item When an existing peer $v$ leaves, its child $c_m(v)$ contacts its parent $p_m(v)$ and establishes a new edge between both. As a result, exiting edges $(p_m(v), v)$ and $(v, c_m(v))$ are replaced with $(p_m(v),c_m(v))$.
\end{enumerate}
\end{algorithm}

We  present the peer pairing algorithm, i.e., the algorithm used by every peer  $v$ to find its  parents and children,  in Algorithm~\ref{peerAlgo}. At a given time instant, let $L_m=(V,E_m)$ denote a digraph with all peers and their $m$-th outgoing edges, i.e., $E_m=\{(v, c_m(v))| v\in V\}$. We call this graph \emph{layer $m$} for $m=1,2,\cdots, M$.
The network topology  is then  the graph formed by the superposition   of  all the layers. A layer begins with an initial peer, the source, with a loop. When a new peer $v$ joins, it chooses an edge from the layer and breaks into the middle of the edge or creates a loop. When an existing peer leaves, its incoming edge and outgoing edge in the layer are connected directly. Hence, at any given movement, a layer must be a 1-regular digraph where every peer has only one incoming edge and one outgoing edge in the layer.

Let $\Pi(V)$ be the set of all possible 1-regular digraphs that can be made of peers in $V$. A simple way to construct a 1-regular digraph is by a permutation of the peers.
For $V=\{1,\cdots, N\}$, let $(v^{(1)}, v^{(2)}, \cdots, v^{(N)})$ be a permutation of $(1,2, \cdots, N)$. If we draw an  edge from $i$ to $v^{(i)}$ for each $i$, all peers will have exactly one incoming edge and one outgoing edge. Hence, the resulting graph becomes 1-regular digraph. Since the number of all possible permutations is $N!$, we can make  $N!$ different 1-regular digraphs using  $N$ peers, i.e., $\Pi(V)=|V|!$. The following result shows that each layer is one of these graphs equally likely and is independent of other layers:
\begin{proposition}
At a given time instance, suppose $V$ is the set of existing peers. Then, layers $L_1$, $L_2$, ..., $L_M$ are mutually independent and uniformly distributed in $\Pi(V)$, i.e., for each $G\in \Pi(V)$
$$\prob{G_m=G}=\frac{1}{|V|!} \;\;\text{for each }m.$$
\end{proposition}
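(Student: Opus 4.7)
The plan is to establish both claims (mutual independence across layers and uniformity of each layer) simultaneously by induction on the sequence of join and leave events generated by Algorithm~\ref{peerAlgo}. The base case has $V=\{1\}$, at which moment $\Pi(V)$ contains the single self-loop and every layer is deterministically equal to it, so both conditions hold trivially.

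For the inductive step, independence is the easy part. At a join, the peers $w_1,\ldots,w_M$ are drawn independently, and the modification of $L_m$ uses only $(L_m,w_m)$; at a departure, each $L_m$ is updated deterministically using only its own parent/child structure at $v$. Consequently, if $L_1,\ldots,L_M$ are mutually independent before an event, then the tuples $(L_m,w_m)$ (or simply the $L_m$) remain independent across $m$ after the event, so the updated layers are independent. It therefore suffices to show that each single layer $L_m$ stays uniform on $\Pi(V)$ across each event.

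Uniformity at a join. Let $N=|V|$ just before $v$ arrives. By the inductive hypothesis $L_m$ is uniform over the $N!$ graphs in $\Pi(V)$, and $w_m$ is independently uniform over the $N+1$ elements of $V\cup\{v\}$, so $(L_m,w_m)$ is uniform on a set of size $(N+1)!$. I would exhibit an explicit bijection onto $\Pi(V\cup\{v\})$ by writing the inverse map: given $G'\in\Pi(V\cup\{v\})$, in a $1$-regular digraph the vertex $v$ either has a self-loop (so $w_m=v$ and $L_m$ is $G'$ restricted to $V$), or $v$ has a unique in-neighbor $u$ and a unique out-neighbor $u'$ in $V$ (so $w_m=u$ and $L_m$ is obtained from $G'$ by deleting $v$ and inserting the edge $(u,u')$). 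Since $|\Pi(V\cup\{v\})|=(N+1)!$, domain and codomain have the same cardinality and the map is a bijection; hence $L_m'$ is uniform on $\Pi(V\cup\{v\})$.

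Uniformity at a departure. The algorithm sends $L_m$ to $L_m'$ deterministically by splicing out $v$. I would verify that this map is exactly $N$-to-one onto $\Pi(V\setminus\{v\})$: each $G'\in\Pi(V\setminus\{v\})$ has precisely $N$ preimages in $\Pi(V)$, namely the graph obtained by adding $v$ as a self-loop to $G'$, together with the $N-1$ graphs obtained by subdividing one of the $N-1$ edges of $G'$ with $v$. Since $|\Pi(V)|/|\Pi(V\setminus\{v\})|=N$, this uniform many-to-one pushforward sends the uniform distribution on $\Pi(V)$ to the uniform distribution on $\Pi(V\setminus\{v\})$, completing the inductive step. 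The main technical care is in the join bijection, and specifically in handling the degenerate cases where $w_m=v$ or where the edge $(w_m,c_m(w_m))$ that $v$ splits is itself a self-loop at $w_m$; parametrizing the inverse by the in-neighbor and out-neighbor of $v$ makes all such cases collapse into the two cleanly enumerated types above.
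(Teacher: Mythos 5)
Your proof is correct. The paper does not actually supply a proof of this proposition in either the body or the appendix (it only sketches the random-permutation equivalence later and points to \cite{Kim13TiT} for the Hamiltonian-cycle variant), so there is no in-paper argument to compare against; your induction on join/leave events, with the explicit bijection $(L_m,w_m)\leftrightarrow L_m'$ for joins and the uniformly $N$-to-one splicing map for departures, is the natural and complete argument. The one point worth making explicit: you are implicitly conditioning on the entire arrival/departure sequence (i.e., the identity and order of peers joining and leaving), so that the only residual randomness at each step is the i.i.d.\ choice of $w_1,\ldots,w_M$; under that framing the independence-preservation step is airtight, since each $L_m'$ is a deterministic function of $(L_m,w_m)$ at a join and of $L_m$ alone at a departure. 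Your cardinality checks are right: for the join bijection the domain $\Pi(V)\times(V\cup\{v\})$ and codomain $\Pi(V\cup\{v\})$ both have size $(N+1)!$, and for the departure map each $G'\in\Pi(V\setminus\{v\})$ has exactly $N$ preimages (one self-loop insertion plus $N-1$ edge subdivisions), matching the ratio $N!/(N-1)!$.
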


Note the key difference of the peer-pairing algorithm in this paper from that in our previous paper \cite{Kim13TiT}. In the previous algorithm, the  name server first  chooses $(w_1, \cdots, w_M)$ from $V$ and then adds new peer $v$ to $V$, which eliminates the probability of having  $w_m=v$. Hence, new  peers do not create a loop and thus each layer is a Hamiltonian cycle, which is a special case of regular-1 digraphs (see  \cite{Kim13TiT} for the details). In contrast, new peers in this paper may create loops in a layer and thus a layer could have multiple cycles or loops.

Although the algorithm constructs and maintains the network topology in a distributed fashion, it still needs a name server that provides a new peer with the IP addresses of $M$ existing peers. The server performing this minimal function is called a tracker, and is commonly used in most P2P networks. Under our algorithm, the name server only  knows the set of existing peers and does not need to know the connections between them. Further, the set of existing peers could be outdated, in which case new peers contact the name server again if some of the retrieved peers are not reachable. Lastly, our algorithm is scalable because every peer exchanges chunks with $2M$ peers, regardless of the network size.

\subsection{Chunk Dissemination Algorithm}

We next present the chunk-exchange algorithm that  disseminates chunks over the network topology constructed by the aforementioned peer-pairing algorithm. We assume that the network is time-slotted and the source  generates up to $M$ unit-size chunks from real-time video contents during every time slot. We also assume that  every peer can upload at most one chunk through every outgoing edge. Since every peer has $M$ outgoing edges, we are assuming that every peer has an upload bandwidth of $M$. Since every peer has $M$ incoming edges, the download bandwidth required for our algorithm is also $M$. When a peer receives a chunk at a give time slot, this peer can transmit the chunk from the next time slot.

\begin{algorithm}
\caption{Random Flow-Assignment (RFA) Algorithm}\label{coloringAlgo}
\begin{algorithmic}
\State $d(v)\leftarrow\infty$, $\chi(v)\leftarrow 0$,
\State $d^*\leftarrow \lceil \log_2\frac{N}{\log^{c}N} \rceil$ for $0<c<1$.
\State $\mathcal{M}=\mathcal{F}=\{1,2,\cdots,M\}$.
\If {there exists $m^*$ such that $p_{m^*}(v)=1$}
	\State $d(v)\leftarrow 1$, $\chi(v)\leftarrow m^*,$ and $f_{m^*}(v)\leftarrow m^*.$
	\State $\mathcal{M}\leftarrow \mathcal{M}\setminus \{m^*\}$,
			$\mathcal{F}\leftarrow \mathcal{F}\setminus\{f_{m^*}(v)\}$.
\ElsIf {there exists a parent $p_m(v)$ with $d(p_m(v))<d^*$}
	\State $m^*\leftarrow \mathop{\arg\min}_m d(p_m(v))$, where ties are broken randomly.
	\State $d(v)\leftarrow  d(p_{m^*}(v))+1$, $\chi(v)\leftarrow \chi(p_{m^*}(v)),$
	\State and $f_{m^*}(v)\leftarrow \chi(v)$	
		\State $\mathcal{M}\leftarrow \mathcal{M}\setminus \{m^*\}$,
				$\mathcal{F}\leftarrow \mathcal{F}\setminus\{f_{m^*}(v)\}$.
\EndIf
\For {each $m$ in $\mathcal{M}$}
	\State randomly choose $f_m(v)$ from $\mathcal{F}$.
			\State $\mathcal{F}\leftarrow \mathcal{F}\setminus\{f_{m}(v)\}$.
\EndFor
\State run the algorithm again if a parent $i$ changes $d(i)$ or $\chi(i)$.
\end{algorithmic}
\end{algorithm}

For chunk dissemination, we split the chunks into $M$ flows, named flow~1, flow~2, ...,  flow~$M$. When the source generates $M$ or less chunks  at a given time slot, it assigns the $M$ flows to the chunks uniquely. We call the chunk assigned flow-$f$ \emph{flow-$f$ chunks}. Every peer also assigns $M$ flows to its $M$ incoming edges uniquely. As a result, every edge is assigned one of the $M$ flows. We call edges assigned flow-$f$ \emph{flow-$f$ edges}.
Let $f_m(v)$ be the flow assigned to the $m$-th incoming edge of peer $v$. Every peer $v$ determines  $f_m(v)$ for each $m$ using the flow-assignment algorithm in Algorithm~\ref{coloringAlgo}. Under the flow-assignment algorithm, every peer assigns $M$ flows to all the incoming edges uniquely while a peer may have two or more outgoing edges with the same flow.

Under our chunk dissemination algorithm, each chunk from a flow  is transmitted through the same-flow edges.  Specifically, we assume that every peer maintains a FIFO (First-In-First-Out) queue for each outgoing edge. When a peer receives a flow-$f$ chunk from a parent or the source peer generates a flow-$f$ chunk, the peer  adds the chunk to the queues associated  with the flow-$f$ outgoing edges. At the beginning of a time slot, every peer takes the oldest chunk from each queue and transmits the chunk through the outgoing edge associated with the queue.

\begin{figure}
\centering
\includegraphics[width=0.95\mylength]{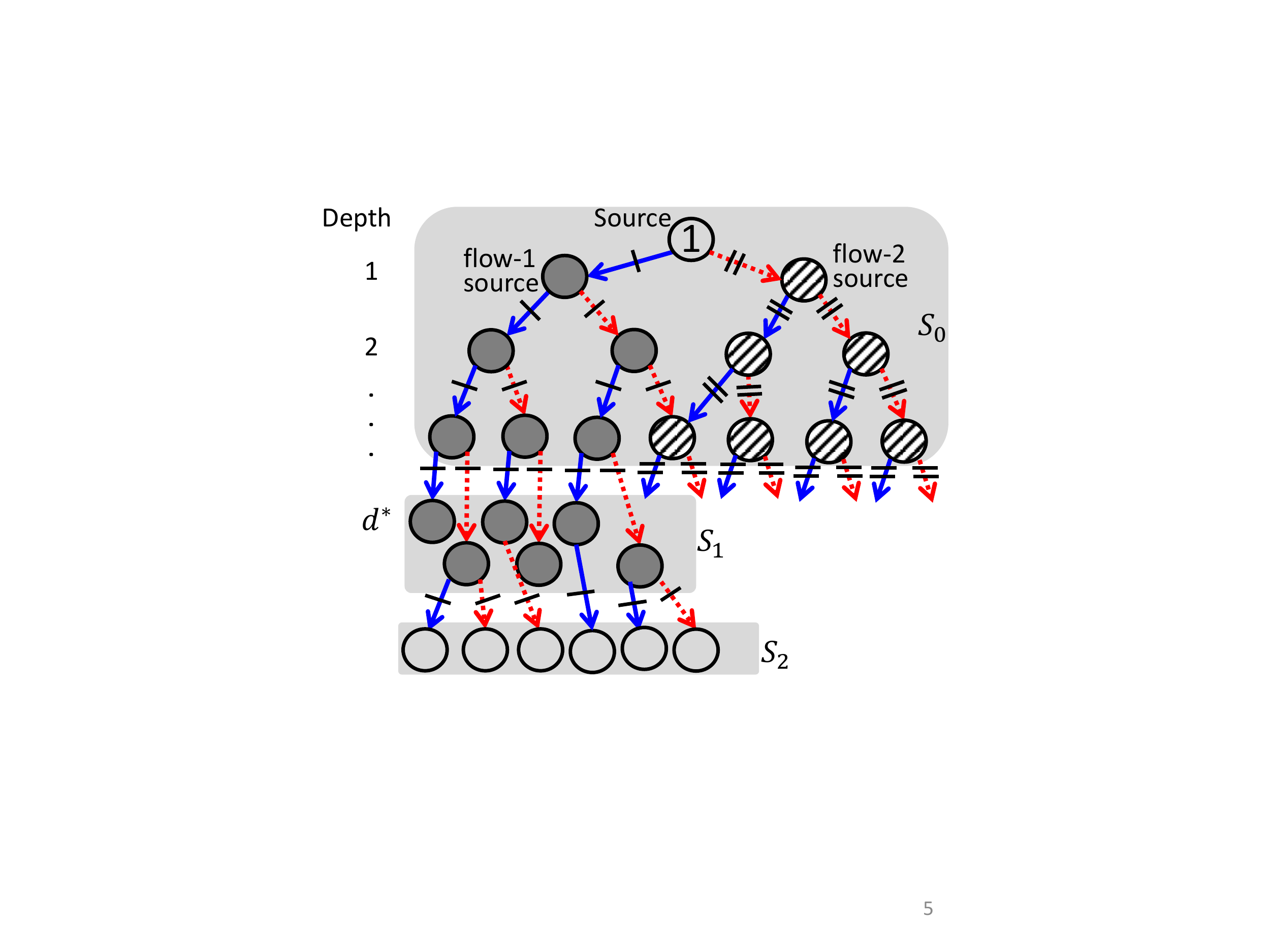}
\caption{Flow assignment under the RFA algorithm ($M=2$): Solid lines represent the edges in layer~1 and dotted lines represent the edges in layer~2. Edges with ``$\;|\;$'' in the middle represent flow-1 edges while edges with ``$\;||\;$''  represents flow-2 edges. Among the peers $v$ with $d(v)\leq d^*$, the gray peers represents flow-1 peers, the peers $v$ with $\chi(v)=1$, and the shaded peers represents flow-2 peers, the peers $v$ with $\chi(v)=2$. The flow-2 peers in depth $d^*$ are not drawn in the figure.}
\label{fig:algorithmDescription}
\end{figure}

Our chunk dissemination algorithm called RFA is presented in Algorithm 2. We next describe how the RFA algorithm works. See Fig.~\ref{fig:algorithmDescription} that shows flow assignment for the simplest case $M=2$. Under the algorithm, the $m$-th child of the source assigns flow-$m$ to its $m$ incoming edge from the source. Hence, the source sends flow-$m$ chunks through its $m$-th outgoing  edge. Since the $m$-th child is the  peer that directly receives flow-$m$ chunks from the source, we call it \emph{flow-$m$ source.} The $m$-th child then sets  its distance to be 1, i.e., $d(v)=1$, and $\chi(v)=m$.
We call the peers with $d(v)\leq d^*$ and $\chi(v)=m$  \emph{flow-m peers} because flow-m chunks are disseminated through these peers.
The children $v$ of the flow-1 source will choose $d(v)=2$, and
  $\chi(v)=1$. $d(v)=2$ means the distance of peer $v$ from the source is 2, and  $\chi(v)=1$ implies that peer $v$ is closer to the flow-1 source than the flow-2 source. This holds for $d(v)=1,2,\cdots, d^*$, for an appropriately chosen $d^*$ which will be described later.
 Note that if a peer has the same distance from the flow-1 source or the flow-2 source and has two parents with different main flows (as the forth peer in depth 3 in Fig.~\ref{fig:algorithmDescription}) it decides its main flow randomly and then the children will be affected by the decision. Beyond depth $d^*$, peers assign flow~1 to one of its incoming edges randomly regardless of the depth or the main flow of its parents.

\section{Streaming Rate and Delay}

In this section, we analyze the throughput and delay performance of the proposed algorithms. We first introduce \emph{flow graphs} that show the dissemination of each flow and then show how to relate these flow graphs to the throughput and delay performance.

In order to analyze the dissemination of flow $f$, we need to characterize a  graph consisting only of flow-$f$ edges. Define \emph{flow graph $f$} to be  $F_f\triangleq (V, \cup_{m=1}^M E_m^{(f)})$ where
$$E_m^{(f)}=\{(v',v )\in E_m| f_m(v)=f   \}.$$ Since every peer assigns flows to its incoming edges using one-to-one mapping, every peer has exactly one incoming edge in each flow graph, which leads to the following property:
\begin{lemma}\label{lemma:flow_graph}
In flow graph $f$, every peer $v$ excluding the source satisfies only one of the following:
\begin{enumerate}
\item There is a unique path from the source to peer $v$
\item There is no path from the source to peer $v$.
\end{enumerate}
\end{lemma}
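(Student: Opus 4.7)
The plan is to exploit the key structural fact highlighted just before the lemma: in flow graph $F_f$, every non-source peer has exactly one incoming edge (because each peer's $M$ incoming edges are assigned the $M$ flows via a one-to-one mapping, so exactly one of them is labeled $f$). This turns $F_f$ into a functional graph on the non-source vertices, which makes both existence and uniqueness easy to analyze.

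First, I would fix a non-source peer $v$ and build a backward trajectory $v_0 = v, v_1, v_2, \ldots$ by iteratively letting $v_{k+1}$ be the unique tail of the unique flow-$f$ incoming edge of $v_k$, continuing as long as $v_k$ is not the source. Since the vertex set is finite, exactly one of two things must happen: either some $v_k$ equals the source, or the trajectory revisits a vertex and thereafter loops forever without ever hitting the source. These two cases will correspond to the two alternatives in the lemma statement.

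In the first case, reversing the segment $v_k, v_{k-1}, \ldots, v_0$ yields a directed path in $F_f$ from the source to $v$, establishing existence. For uniqueness, I would argue by induction on the length of an arbitrary source-to-$v$ path: the last edge of any such path must be the unique incoming flow-$f$ edge of $v$, so its penultimate vertex is forced to be $v_1$; repeating the argument forces the entire path to coincide with the reverse of the backward trajectory. In the second case, I would show contrapositively that no source-to-$v$ path can exist: any such path would, by the same forced-edge argument, have to traverse $v, v_1, v_2, \ldots$ in reverse, which would require the backward trajectory to eventually reach the source, contradicting our assumption that it cycles.

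I do not anticipate a genuine obstacle; the only mild subtlety is making the "forced-edge" step airtight, i.e., explicitly invoking that a peer has in-degree one in $F_f$ to conclude that the predecessor of $v$ on any incoming path is uniquely determined. Once that is stated cleanly, both the existence-and-uniqueness case and the no-path case follow by a short induction, and the two alternatives are manifestly mutually exclusive because the trajectory either hits the source or does not.
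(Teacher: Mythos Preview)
Your proposal is correct and follows essentially the same approach as the paper: trace backward along the unique flow-$f$ incoming edge from $v$ until you either hit the source (giving a unique path) or revisit a vertex (so $v$ lies on a cycle disconnected from the source). Your write-up is simply more explicit about the uniqueness step than the paper's one-paragraph version.
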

\begin{proof}
Since every peer has only one incoming edge, we trace backward the incoming edges from a peer $v$. This backward tracing will end when we hit the source because the source does not have any incoming edge in the flow graph. If we do not hit the source, we should revisit a peer twice.  Hence, there is a unique path from the source to peer $v$ or there is a cycle where peer $v$ is connected.
\end{proof}

\begin{figure}
\centering
\includegraphics[width=0.4\mylength]{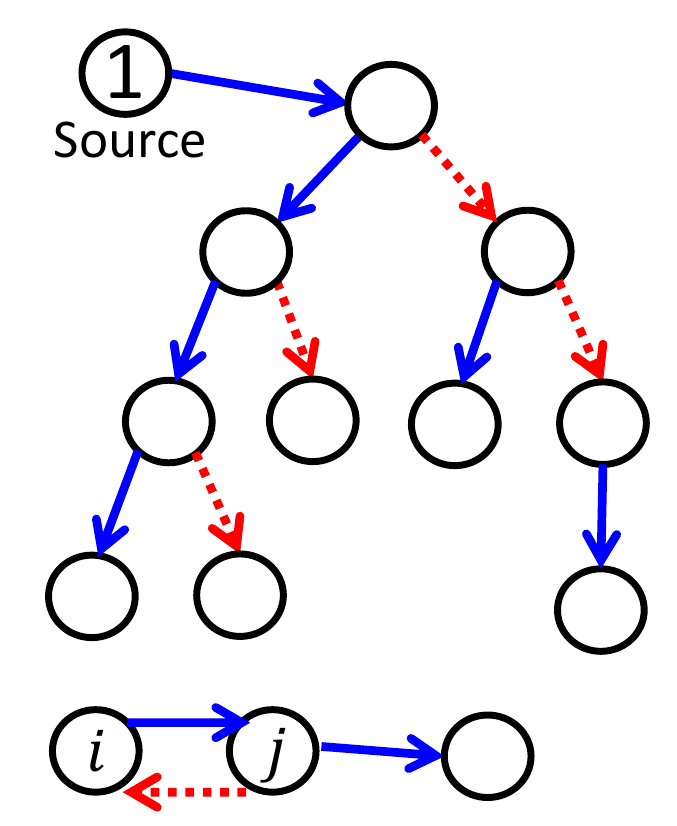}
\caption{An example of flow graph 1: Solid  edges represent the edges in layer 1 while dotted edges represent the edges in layer 2. All the edges in the figure are flow-1 edges. Peers belong to  either a tree rooted at the source or a cycle (including a loop).}
\label{fig:flowgraph}
\end{figure}

An example of a flow graph is illustrated in Fig.~\ref{fig:flowgraph}.
From this lemma, peers are either connected or disconnected from the source. Since the path from the source to a connected peer is unique, there must be a tree that is rooted at the source and covers all the connected peers. The lemma also implies that the disconnected peers form a loop or  a cycle among themselves. Since flow-$f$ chunks are transmitted only through flow-$f$ edges, the disconnected peers  cannot receive the chunks. In Section~\ref{sec:simulationResults}, we proposed an extension of our algorithms that can disseminate the missing flows to the disconnected peers.

The next lemma shows whether the connected peers can receive all the flow-$f$ chunks or not:
\begin{lemma}\label{lemma:dissemination_delay}
A flow-$f$ chunk generated by the source  at time slot $t$ is received by peer $v$ by time slot $t+h$, where $h$ is the distance from the source to peer $v$ in flow graph $f$.
\end{lemma}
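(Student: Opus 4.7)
My plan is to prove the bound by induction on $h$, the length of the unique path from the source to $v$ in flow graph $f$ guaranteed by Lemma~\ref{lemma:flow_graph}. For the base case $h=1$, peer $v$ sits at the tail of a flow-$f$ edge leaving the source: the chunk generated at slot $t$ is placed into the FIFO queue of that edge and, by the rule that a peer may transmit a chunk starting from the slot after it is received, the source serves it on that edge at slot $t+1$, delivering it to $v$ by slot $t+1$ as required.

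For the inductive step, suppose the claim holds for all peers at distance $h$, and let $v$ lie at distance $h+1$ with parent $u$ in flow graph $f$. By the induction hypothesis $u$ receives the chunk at some slot $s\le t+h$. I then want to conclude that $u$ forwards it on the flow-$f$ edge $(u,v)$ at slot $s+1\le t+h+1$, which reduces to showing that the chunk does not incur any additional queueing delay at $u$.

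The main obstacle, and the piece of the argument I expect to require the most care, is establishing this no-queueing property. I plan to exploit the structural consequence of Lemma~\ref{lemma:flow_graph}: every peer has exactly one incoming flow-$f$ edge, so it receives at most one flow-$f$ chunk per slot. Consequently every flow-$f$ outgoing FIFO queue at any peer sees an arrival rate of at most one chunk per slot, exactly matching its unit service rate. A short auxiliary induction over time, starting from the empty initial state and tracing arrivals forward through the flow graph, then shows that each such queue contains at most one chunk at any instant and that any chunk inserted into such a queue at slot $s$ is served at the beginning of slot $s+1$. Combining this one-slot-per-hop property with the induction on $h$ yields the claimed $t+h$ delivery bound.
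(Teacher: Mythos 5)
Your proof is correct and takes essentially the same approach as the paper: the paper also observes that the $h$ flow-$f$ edges on the unique source-to-$v$ path form $h$ cascaded FIFO queues, each serving one chunk per slot, with at most one flow-$f$ chunk entering the chain per slot, so that no backlog forms and the delay is one slot per hop. Your version just makes explicit, via the double induction on hop count and on time, the no-queueing argument that the paper states more informally; the paper also separately notes the vacuous case $h=\infty$ for peers in a cycle, which you implicitly cover by only considering finite distances.
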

\begin{proof}
From Lemma~\ref{lemma:flow_graph}, if peer $v$ belongs to a cycle, it should be disconnected from the source in flow graph $f$, i.e., $h=\infty$. Hence, this lemma is satisfied trivially.
If peer $v$ belongs to the tree rooted at the source, there is a unique  path from the source to peer $v$ that consists of $h$ edges. Since each edge is associated with a FIFO queue, the route can be seen as $h$ cascaded FIFO queues. Since every queue processes one chunk per time slot and at most one chunk arrives  in the first queue during a time slot, the peer at the end of the cascaded queue can receive a new chunk within $h$ time slots.
\end{proof}

We now analyze the throughput performance using  flow graphs. If a peer $v$ is  connected from a source in flow graph $f$, Lemma~\ref{lemma:dissemination_delay}
shows that the peer will eventually receive all flow-$f$ chunks. Hence, if a peer is connected from the source in all the flow graphs, the peer can receive all the chunks generated by the source and thus achieves the reception rate of $M$. It is shown in
 \cite{Edmonds1972} that the optimal streaming rate that can be guaranteed to all peers is $M$ when peers contribute an upload bandwidth $M$. Hence, we can say that the network is near optimal in throughput if almost all peers are connected from the source in all the flow graphs. We will show in the next section that $(1-o(1))$ fraction of peers satisfy this condition with high probability.

Besides the throughput performance, we are also interested in the time that it takes for each chunk to be delivered to almost all peers. Lemma~\ref{lemma:dissemination_delay} implies that a flow-$f$ chunk that is newly generated at time slot $t$ can be disseminated to all the peers within $h$ hops from the source in flow graph $f$ no later than time slot $t+h$.  In other words, if almost all peers are within $\Theta(\log N)$ hops from the source in each flow graph, we can say that these peers can receive all the chunks within $\Theta(\log N)$ time slots. Since $\Theta(\log N)$ is the optimal delay in P2P streaming, proving that implies that our algorithm achieves the optimal delay performance. Hence, in the next section, we focus on the depth of the tree in each flow graph.

\section{Depth of A Flow Graph}

In this section, we show that almost all peers are within $O(\log N)$ hops from the source in every flow graph with high probability. To show this, we first prove the result holds for flow graph~1, then extend the result to all flow graphs using a union bound. The proofs are fairly technical, but the key ideas are fairly easy to grasp. So in the interest of space and for simplicity of exposition, we present the key ideas in this section, and refer the reader to \cite{Kim13tech2} for the details of the proofs.

To show that almost all peers are within $\Theta(\log N)$ hops from the source in flow graph~1, we take the following three steps:

\textbf{Step 1:} We first show that the graph up to depth $d^*$, e.g.,  $S_0$ in Fig.~\ref{fig:algorithmDescription}, is a binary tree, except for a few peers in the middle  who may have only one outgoing edge in the tree, with high probability. A half of the tree consists only of flow-1 edges passing through flow-1 peers (e.g., the gray peers and their flow-1 edges in Fig~\ref{fig:algorithmDescription}). This means that  all flow-1 peers within depth $d^*$ are connected from the source within $d^*$ hops in flow graph~1.

\textbf{Step 2:} We show that almost all peers below depth $d^*$ are connected from the flow-1 peers in depth $d^*$, set $S_1$ in Fig.~\ref{fig:algorithmDescription}, through $\log N$ or less flow-1 edges. Hence, these peers are connected from the source within $O(\log N)$ hops in flow graph~1 with high probability.

\textbf{Step 3:} We finally show that almost all  flow-2 peers within depth $d^*$ have their flow-1 incoming edges from the peers below depth $d^*$ with high probability. Since almost all peers below depth $d$ are connected with $O(\log N)$ hops from the source in flow graph~1, so are the flow-2 peers.

\subsection{Expansion in the Shortest-Path Arborescence}

\begin{figure}[ht]
\centering
\subfigure[Before drawing edges from the peers in depth $2$]{\label{fig:randomEdges1}
\includegraphics[width=0.9\mylengthNarrow]{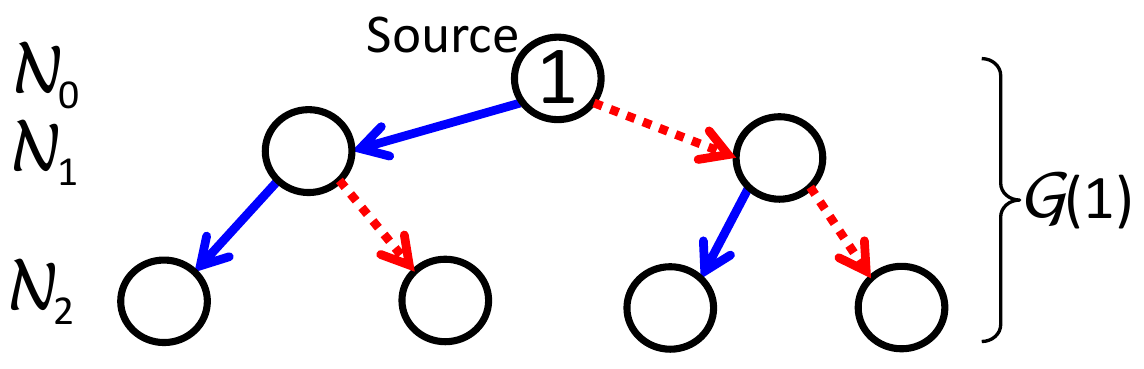}} 

\subfigure[After drawing layer-1 edges from $\setN{2}$: $\N{3}'=3$]{\label{fig:randomEdges2}
\includegraphics[width=0.9\mylengthNarrow]{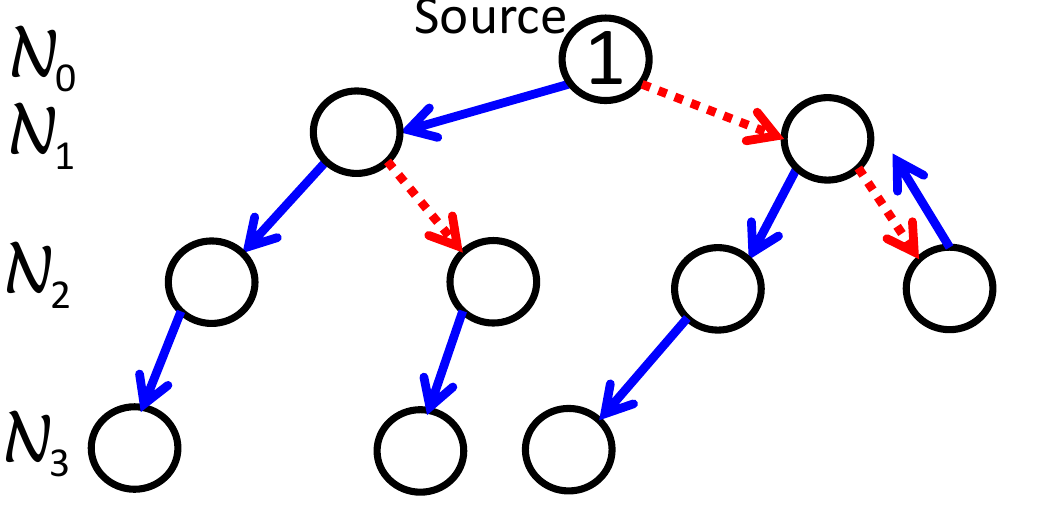}} 

\subfigure[After drawing layer-2 edges from $\setN{2}$: $\N{3}''=3$]{\label{fig:randomEdges3}
\includegraphics[width=0.9\mylengthNarrow]{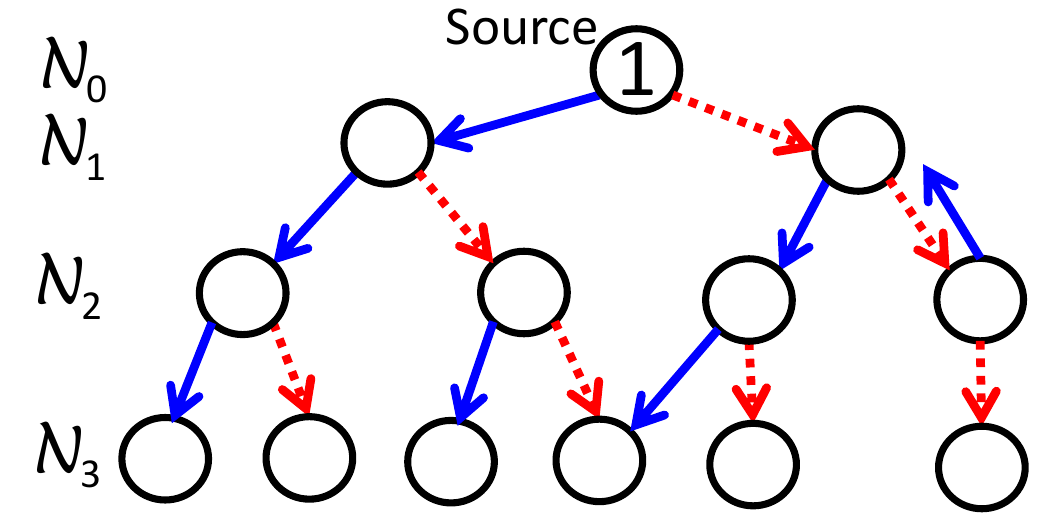}} 

\caption{Drawing random edges from the peers in depth 2:  solid edges represent layer-1 edges and dotted edges represent layer-2 edges. In this example, we have $\N{3}'=3$, $\N{3}''=3$, and thus $\N{3}=6$.}
\end{figure}

In this subsection, we first consider the number of peers in each depth $d\leq d^*$. Let $\setN{d}$ be the peers $v$ in depth $d$, i.e., $d(v)=d$, and let $\N{d}\triangleq |\setN{d}|.$  For simplicity, we also define
 $\setN{\leq d}\triangleq\cup_{h=0}^d \setN{d}$ and $\setN{>d}\triangleq \setN{\leq d}^C$ to be the set of peers that are within and beyond, respectively, depth $d$. Let  $\N{\leq d}=|\setN{\leq d}|$ and  $\N{> d}=|\setN{> d}|$.

Recall that each layer constructed by our peer-pairing algorithm is a random 1-regular digraph uniformly distributed among all possible 1-regular digraphs, and an equivalent graph can be constructed by random permutation. Specifically, for given peers and no initial  edges, we pick  the first peer  arbitrarily and draw an edge from the peer to a peer randomly chosen among the peers that do not have an incoming edge. Initially, all the peers can be a candidate because there is no edge drawn. We then pick the second peer among the peers that do not have an outgoing edge and draw an edge from the peer to a peer randomly chosen among the peers that do not have an incoming edge. If we draw $N$ edges repeating this method, the resulting graph becomes a random 1-regular digraph, which has the same distribution as an each layer.

Using this edge drawing  procedure with random permutation, we find the distribution of the number $\N{d}$ of peers in each depth $d$.  Define $\mathcal{G}(d)$ to be the graph of peers within depth $d+1$ and the outgoing edges from the peers within depth $d$. See the example in Fig.~\ref{fig:randomEdges1}. \emph{Note that when $\cG(d)$ is given, $\N{0}, \N{1}, \cdots, \N{d+1}$, $\N{\leq d+1}$, and $\N{>d+1}$ are determined.}
In the example in Fig.~\ref{fig:randomEdges1}, conditioned on $\cG(1)$, i.e., conditioned on that the network topology up to depth $2$ is given $\cG(1)$, we first draw layer-1 edges from the peers in depth $2$, i.e., $\setN{2}$. When we draw a layer-1 outgoing edge from the first peer in depth 2, there is one  layer-1 edge from each peer in $\setN{\leq 1}$. This means that there are $\N{\leq 1}$ peers that have a layer-1 incoming edge. Hence, the new edge can be drawn to one of the remaining $\N{>1}$ peers. If we draw an edge to one of the $\N{> 2}$ peers that are not within depth 2, the peer will be added to $\setN{3}$ as shown in Fig.~\ref{fig:randomEdges2}. If we draw an edge to one of the peers within depth $2$ (see the fourth peer in $\setN{2}$), the peer will not be added to $\setN{3}$. Hence, drawing layer-1 edges from the peers in $\setN{2}$ can be seen as the experiment where the peers in $\setN{2}$ choose one peer without repetition among $\N{>1}$ peers and a peer chosen from $\setN{>2}$ is considered to be a success.

Define $\setN{d}'$ to be the set of the peers in depth $d$ that are connected from the peers in depth $d-1$ through layer-1 edges. Let $\N{d}'=|\setN{d}'|$. In the example, $\N{3}'$ is equivalent to the number of red balls among $\N{2}$ balls drawn from a jar containing $\N{>1}$ balls including $\N{>2}$ red balls without replacing balls. The number of the red balls in this example is known to be a hyper geometric random variable with parameters $(\NC{1}, \NC{2}, \N{2})$. In general, conditioned on $\N{0}, \N{1},\cdots, \N{d}$,  $\N{d+1}'$ is a hyper geometric random variable with  $(\NC{d-1}, \NC{d}, \N{d})$ which has the following mean:
\begin{equation}
E[\N{d+1}'\;|\;\N{0},\cdots, \N{d}]= \frac{\NC{d}}{\NC{d-1}}\N{d}.\label{eq:mean_of_Nd'}
\end{equation}

After drawing the layer-1 edges from the peers in depth~2, we next draw layer-2 edges from the same  peers. Let $\setN{d}''$ be the set of the peers in depth $d$ that are connected from the peers in depth $d-1$ \emph{only} through layer-2 edges. Let $\N{d}''=|\setN{d}''|$. Since there is one  layer-2 edge from each  peers in $\setN{\leq 1}$, there are $\NC{1}$ peers that we can draw layer-2 edges to. If we drawn an edge from a peer in $\setN{2}$ to a peer not in $\setN{\leq 2}\cup \setN{3}'$, the peer is newly added to $\setN{3}''$, which will be considered to be a success. Otherwise, $\setN{3}''$ is not incremented. (See that the third peer in $\setN{3}$ in Fig~\ref{fig:randomEdges3} does not belong to $\setN{3}''$, but $\setN{3}'$.). Hence, drawing layer-2 edges from $\setN{2}$ can be seen as the experiment that $\N{2}$ peers are drawn without replacement from a jar containing $\N{>1}$ balls including $\N{>2}$ red balls. Hence, $\setN{3}''$ is equivalent to  the number of the red drawn balls in the experiment, which  is a hyper geometric random variable with $(\NC{1}, \NC{2}-\N{3}', \N{2})$. In general, $\N{d+1}''$ is a hyper geometric random variable with $(\NC{d-1}, \NC{d}-\N{d+1}', \N{d})$, which has the following mean:
\begin{equation}E[\N{d+1}''\;|\;\N{0},\cdots, \N{d}, \N{d+1}']= \frac{\NC{d}-\N{d+1}'}{\NC{d-1}}\N{d}.\label{eq:mean_of_Nd''}
\end{equation}
Combining (\ref{eq:mean_of_Nd'}) and (\ref{eq:mean_of_Nd''}), the mean of the number of the peers in depth $d+1$ is given by
\begin{equation}E[\N{d+1}\;|\;\N{0},\cdots, \N{d}]= 2\frac{\NC{d}\N{d}}{\NC{d-1}}-\frac{\NC{d}\N{d}^2}{(\NC{d-1})^2}.\label{eq:mean_of_Nd}
\end{equation}
From (\ref{eq:mean_of_Nd}), when $\N{\leq d}$ is small, i.e., the number of the peers within depth $d$ is small compared to $N$, the number $\Nd$ of the peers in depth $d$ is expected to double approximately as $d$ increases. We want to show that this phenomenon continues to hold for all $d\leq d^*$, which  requires a stronger concentration result.

 Using a property of a hyper geometric distribution, we derive the concentration result for pretty small $\Nd$:
\begin{lemma}\label{lemma:2-ary}
The shortest path arborescence up to depth $d$ is a binary tree  with probability
$$\prob{\N{h}=2\N{h-1}, \forall h\leq d}\geq 1-\frac{d\cdot 2^{2d}}{N-2^d}$$
\end{lemma}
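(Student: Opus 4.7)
The plan is to write the event $\{\N{h}=2\N{h-1},\ \forall h\leq d\}$ as the intersection $\bigcap_{h=1}^d A_h$, where $A_h:=\{\N{h}=2\N{h-1}\}$, and to upper-bound each conditional failure probability $\prob{A_h^c\mid A_1,\ldots,A_{h-1}}$ using the hypergeometric description of $\N{h}'$ and $\N{h}''$ given in the preceding discussion.

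First, I would fix $h$ and condition on the graph $\cG(h-2)$ together with the event $A_1\cap\cdots\cap A_{h-1}$. Under this conditioning, $\N{h-1}=2^{h-1}$ and $\N{\leq h-1}=2^h-1$, so $\NC{h-1}=N-(2^h-1)$ and $\NC{h-2}=N-(2^{h-1}-1)$. The event $A_h$ holds iff $\N{h}'=\N{h-1}$ (every layer-1 outgoing edge from $\setN{h-1}$ hits a peer outside $\setN{\leq h-1}$) \emph{and} $\N{h}''=\N{h-1}$ (every layer-2 outgoing edge hits a peer outside $\setN{\leq h-1}\cup\setN{h}'$, so no duplicate child appears across the two layers).

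Next, I would evaluate the hypergeometric pmf at its extreme value. Since $\N{h}'$ is hypergeometric with parameters $(\NC{h-2},\NC{h-1},\N{h-1})$,
$$\prob{\N{h}'=\N{h-1}\mid\mathrm{hist}}=\prod_{i=0}^{\N{h-1}-1}\frac{\NC{h-1}-i}{\NC{h-2}-i}\geq 1-\frac{\N{h-1}^2}{\NC{h-2}-\N{h-1}},$$
which follows by lower-bounding each factor by $1-\N{h-1}/(\NC{h-2}-i)$ (using $\NC{h-2}-\NC{h-1}=\N{h-1}$) and summing. Given further that $\N{h}'=\N{h-1}$, the variable $\N{h}''$ is hypergeometric with parameters $(\NC{h-2},\NC{h-1}-\N{h-1},\N{h-1})$, and the same calculation yields
$$\prob{\N{h}''=\N{h-1}\mid\mathrm{hist},\,\N{h}'=\N{h-1}}\geq 1-\frac{2\N{h-1}^2}{\NC{h-2}-\N{h-1}}.$$
A union bound then gives $\prob{A_h^c\mid A_1,\ldots,A_{h-1}}\leq 3\N{h-1}^2/(\NC{h-2}-\N{h-1})$. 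Since on the conditioning event $\N{h-1}=2^{h-1}$ and $\NC{h-2}-\N{h-1}=N-2^h+1\geq N-2^d$, each conditional failure probability is at most $(3/4)\cdot 2^{2h}/(N-2^d)\leq 2^{2d}/(N-2^d)$. A final union bound over $h=1,\ldots,d$ produces the claimed estimate.

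The main obstacle is the bookkeeping: correctly tracking the population and success counts of the two hypergeometric random variables through the sequential drawing of layer-1 edges followed by layer-2 edges, and in particular verifying that the ``bad'' peer count for the layer-2 draw properly includes both the already-visited peers in $\setN{\leq h-1}$ and the $\N{h}'$ new children just claimed by the layer-1 draws. Once this counting is pinned down, everything else reduces to an elementary union bound and a geometric-series-type estimate.
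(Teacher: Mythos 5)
Your proposal is correct and follows essentially the same route as the paper: condition on the history $\N{0},\ldots,\N{h-1}$ together with the event that the tree is binary so far, identify $\N{h}'$ and $\N{h}''$ as hypergeometric random variables, bound the probability that each attains its maximum value, and chain the per-depth bounds across $h=1,\ldots,d$. The only cosmetic differences are that the paper lower-bounds $\prob{\N{h}=2\N{h-1}}$ directly via the product form $(1-2\N{h-1}/(\NC{h-1}+1))^{2\N{h-1}}$ and then applies Bernoulli's inequality, whereas you split the failure event into the two hypergeometric pieces and add the two tail bounds (with a slightly different per-step constant, $3$ versus $4$), which is a trivial repackaging; the final estimate and the key counting (that on the binary conditioning event $\N{h-1}=2^{h-1}$ and $\NC{h-2}-\N{h-1}\geq N-2^d$) are identical.
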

\jour{The proof is provided in Appendix~A in our online technical report \cite{Kim13tech2}.}
\tech{The proof is provided in Appendix~\ref{append:lemma:2-ary}.} The lemma implies that the arborescence is 2-ary up to depth
$\frac{1}{2}(1-\epsilon) \log_2 N$ with high probability.
However, this concentration result does not hold for a higher depth. We can derive another concentration result for this case:
\begin{proposition}\label{prop:concentrationOnNd}
Conditioned on $\N{0}, \N{1},\cdots, \N{d}$, for $0<\phi<1-\frac{2\N{d}}{\NC{d-1}}$,
$$\prob{\N{d+1}<  2\phi \N{d}}\leq 2 \exp\left(-2 \sigma_d^2\N{d}\right)$$
where $\sigma_d=(1-\frac{2\N{d}}{\NC{d-1}}-\phi)$.
\end{proposition}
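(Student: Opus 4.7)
The plan is to decompose $\N{d+1}$ into the two contributions $\N{d+1}'$ (new peers reached by layer-1 edges) and $\N{d+1}''$ (further new peers reached by layer-2 edges) introduced just above the proposition, and to apply Hoeffding's concentration inequality for the hypergeometric distribution to each of them separately. The key observation is that
\[
\{\N{d+1} < 2\phi\N{d}\} \;\subseteq\; \{\N{d+1}' < \phi\N{d}\} \cup \{\N{d+1}'' < \phi\N{d}\},
\]
since if both $\N{d+1}' \geq \phi\N{d}$ and $\N{d+1}'' \geq \phi\N{d}$, then their sum is at least $2\phi\N{d}$. A union bound therefore reduces the claim to showing that each of these two events has probability at most $\exp(-2\sigma_d^2\N{d})$; the factor of $2$ in the statement is precisely the sum of the two.

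For the first event, condition on $\N{0},\ldots,\N{d}$. Then $\N{d+1}'$ is hypergeometric with sample size $\N{d}$ and mean $p\N{d}$, where $p \;=\; \NC{d}/\NC{d-1} \;=\; 1 - \N{d}/\NC{d-1}$. By Hoeffding's inequality for the hypergeometric distribution,
\[
\prob{\N{d+1}' \leq p\N{d} - t} \;\leq\; \exp\!\left(-\frac{2 t^2}{\N{d}}\right).
\]
Choosing $t = (p - \phi)\N{d}$ and using $p - \phi = 1 - \N{d}/\NC{d-1} - \phi \geq 1 - 2\N{d}/\NC{d-1} - \phi = \sigma_d$ (strictly positive by the hypothesis on $\phi$) yields $\prob{\N{d+1}' < \phi\N{d}} \leq \exp(-2\sigma_d^2\N{d})$, as required.

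For the second event, I would further condition on $\N{d+1}' = k$ for each $k \in \{0,1,\ldots,\N{d}\}$. Under this conditioning, $\N{d+1}''$ is hypergeometric with sample size $\N{d}$ and mean $(\NC{d}-k)\N{d}/\NC{d-1}$, which, since $k \leq \N{d}$, is at least $(1 - 2\N{d}/\NC{d-1})\N{d}$. Applying Hoeffding once more with lower-tail slack $\sigma_d\N{d}$ gives
\[
\prob{\N{d+1}'' < \phi\N{d} \mid \N{d+1}' = k} \;\leq\; \exp(-2\sigma_d^2\N{d})
\]
uniformly in $k$; averaging over $k$ and combining with the bound on $\N{d+1}'$ through the union bound completes the proof.

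The only delicate point in this plan is the worst-case conditioning used for $\N{d+1}''$: a large value of $\N{d+1}'$ shrinks the set of candidate targets still available for the layer-2 edges and hence reduces the conditional mean of $\N{d+1}''$. The hypothesis $\phi < 1 - 2\N{d}/\NC{d-1}$ is calibrated exactly so that, even in the worst case $\N{d+1}' = \N{d}$, the gap $\sigma_d\N{d}$ between the conditional mean of $\N{d+1}''$ and the threshold $\phi\N{d}$ remains strictly positive, which is what lets Hoeffding's bound be applied in the correct (lower-tail) direction with the same exponent $\sigma_d^2$. Everything else is a routine invocation of standard concentration for hypergeometric random variables.
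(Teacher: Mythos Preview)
Your proposal is correct and follows essentially the same approach as the paper: split $\N{d+1}$ into the two hypergeometric contributions $\N{d+1}'$ and $\N{d+1}''$, apply Hoeffding's tail bound to each (using $\N{d+1}'\le \N{d}$ to lower-bound the conditional mean of $\N{d+1}''$), and combine. The only cosmetic difference is that the paper combines via $\prob{\N{d+1}'\ge\phi\N{d},\,\N{d+1}''\ge\phi\N{d}}=\prob{\N{d+1}''\ge\phi\N{d}\mid\N{d+1}'\ge\phi\N{d}}\prob{\N{d+1}'\ge\phi\N{d}}$ rather than your union bound, but the resulting inequality and the factor of $2$ are identical.
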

\jour{The detailed proof is provided in Appendix~B in \cite{Kim13tech2}.}
\tech{The detailed proof is provided in Appendix~\ref{appex:prop:concentrationOnNd}.}

\begin{proposition}\label{prop:mainExpansion}
Define
$$d_1=\left\lceil\frac{\log_2 N}{3} \right\rceil,\;\;d_2=\left\lceil \frac{5\log_2 N}{6}\right\rceil,$$
and
\begin{numcases}{\phi_d=}
1  & $0\leq d < d_1$,\NN\\
1-N^{-\frac{1}{9}} & $d_1\leq d < d_2$,\NN\\
\left(1-\frac{\Nd}{\NC{d-1}} \right)^3 & $d_2\leq d<d^*$.\NN
\end{numcases}
For any $\epsilon>0$, there exists $N_0$ such that for all $N>N_0$,
\begin{align}
&\prob{\N{d^*}\geq\frac{(1-\epsilon)N}{\log^c N},
\N{d}\geq 2 \phi_{d-1}\N{d-1}, \forall d\leq d^*}\NN\\
&\geq 1-\frac{2(1+\epsilon)\log_2 N}{N^{1/3}}.\NN
\end{align}
\end{proposition}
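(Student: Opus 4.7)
The plan is to establish the per-level inequality $\N{d} \geq 2\phi_{d-1}\N{d-1}$ on each of the three pieces of the definition of $\phi_d$ separately and then aggregate the failure probabilities with a single union bound. The three pieces are handled by Lemma~\ref{lemma:2-ary}, Proposition~\ref{prop:concentrationOnNd}, and again Proposition~\ref{prop:concentrationOnNd} with a state-dependent $\phi$, respectively. For $d < d_1 = \lceil \log_2 N/3\rceil$ (call this \emph{Phase 1}), $\phi_d=1$ and the claim is exactly that the shortest-path arborescence is binary up to depth $d_1$, so Lemma~\ref{lemma:2-ary} applied at depth $d_1$ yields this with failure probability $d_1\cdot 2^{2 d_1}/(N-2^{d_1}) = O(\log N/N^{1/3})$. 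This term already matches the $O(\log N/N^{1/3})$ appearing in the statement, so Phases 2 and 3 only need to contribute negligibly more.

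For $d_1 \leq d < d_2 = \lceil 5\log_2 N/6\rceil$ (\emph{Phase 2}) I would apply Proposition~\ref{prop:concentrationOnNd} at each depth with $\phi = 1 - N^{-1/9}$. The deterministic upper bound $\N{d} \leq 2^d \leq 2^{d_2} = N^{5/6}$ (each peer has only two outgoing edges) combined with $\NC{d-1} \geq N/2$ gives $2\N{d}/\NC{d-1} \leq 4 N^{-1/6} \ll N^{-1/9}$ for large $N$, so the admissibility condition $\phi < 1 - 2\N{d}/\NC{d-1}$ holds with $\sigma_d \geq N^{-1/9}/2$. Consequently $\sigma_d^2 \N{d} \geq N^{-2/9}\cdot N^{1/3}/8 = N^{1/9}/8$, giving a per-depth failure probability $2\exp(-N^{1/9}/4)$; the union bound over the $O(\log N)$ depths in this phase is negligible against $O(\log N/N^{1/3})$. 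The resulting bound $\N{d_2} \geq 2^{d_2}(1-N^{-1/9})^{d_2-d_1} = 2^{d_2}(1-o(1))$ also yields the ratio $\N{d_2}/\NC{d_2-1} = \Theta(N^{-1/6})$, which is the starting state for Phase 3.

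The main obstacle is \emph{Phase 3} ($d_2 \leq d < d^*$), where $\phi_d = (1-y_d)^3$ depends on the running state $y_d = \N{d}/\NC{d-1}$. A short algebraic simplification turns the admissibility condition $\phi_d < 1 - 2y_d$ into $y_d(1 - 3y_d + y_d^2) > 0$, which is safely satisfied as long as $y_d$ stays below $1/3$, say; I would maintain this inductively by propagating joint bounds on $\N{d}$ and $\NC{d-1}$. For growth, the recursion $\N{d+1} \geq 2\phi_d \N{d}$ translates into $y_{d+1} \geq 2y_d(1-y_d)^2$, which is essentially a doubling while $y_d$ is small. Starting from $y_{d_2} = \Theta(N^{-1/6})$ and iterating $d^* - d_2 = \log_2 N/6 - c\log_2 \log_2 N + O(1)$ times produces $y_{d^*} \geq (1-o(1))/\log^c N$. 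Simultaneously $\NC{d-1} = N\prod_{h<d}(1-y_h) \geq N(1 - \sum_h y_h) = N(1 - O(1/\log^c N))$ because the $y_h$ form an essentially geometric sequence whose partial sum is dominated by the last term. Multiplying, $\N{d^*} = y_{d^*}\NC{d^*-1} \geq (1-\epsilon)N/\log^c N$, as required. For the concentration input, $\sigma_d = \Theta(y_d)$ while $y_d = o(1)$, so $\sigma_d^2 \N{d} = \Theta(y_d^2 \N{d})$, which is $\Theta(N^{1/2})$ at the start of Phase 3 and $\Theta(N/\log^{3c}N)$ at the end; either way the per-depth failure probability is stretched-exponentially small, and the union bound over $O(\log N)$ depths in Phase 3 is again negligible relative to $O(\log N/N^{1/3})$.

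Combining the three phases, the only non-negligible failure contribution is the $O(\log N/N^{1/3})$ term from Phase 1, giving the bound $1 - 2(1+\epsilon)\log_2 N/N^{1/3}$ claimed in the proposition. The principal technical effort is the Phase 3 analysis, where one must simultaneously verify the admissibility of Proposition~\ref{prop:concentrationOnNd} for a state-dependent $\phi$ and quantitatively track the coupled recursions for $y_d$ and $\NC{d}$ so as to convert the near-doubling of $y_d$ into the explicit lower bound $(1-\epsilon)N/\log^c N$ on $\N{d^*}$.
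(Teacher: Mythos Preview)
Your proposal is correct and mirrors the paper's proof almost exactly: the same three-phase decomposition (binary up to $d_1$ via Lemma~\ref{lemma:2-ary}, then two applications of Proposition~\ref{prop:concentrationOnNd} with the constant $\phi=1-N^{-1/9}$ and the state-dependent $\phi=(1-y_d)^3$ respectively), the same algebraic identity $1-2y-(1-y)^3=y(1-3y+y^2)$ in Phase~3, and the same observation that the Phase~1 failure probability $O(\log N/N^{1/3})$ dominates. The only organizational difference is that the paper front-loads the deterministic implication ``if $\N{d}\geq 2\phi_{d-1}\N{d-1}$ for all $d\leq d^*$ then $\N{d^*}\geq (1-\epsilon)N/\log^c N$'' by directly computing $\prod_{d<d^*}\phi_d$ and bounding $\NC{d^*-1}/\NC{d_2-1}$, whereas you obtain the same conclusion by tracking the recursion $y_{d+1}\geq 2y_d(1-y_d)^2$ together with $\NC{d}=\NC{d-1}(1-y_d)$; the two bookkeeping schemes are equivalent.
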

\jour{The proof is provided in Appendix~C in \cite{Kim13tech2}.}
\tech{The proof is provided in Appendix~\ref{append:prop:mainExpansion}.}

Proposition~\ref{prop:mainExpansion} implies that
the number of the peers in each depth $d$ doubles up to depth $d_1$ with probability 1. For  depth $d_1< d\leq d^*$, the number of the peers in depth $d$ increases by $2\phi_d$ times with high probability. Since $2\phi_d$ in this case is nearly 2, the arborescence up to depth $d^*$ is  nearly a binary tree with high probability.

We next consider the number of flow-1 peers in each depth. Recall that under the RFA algorithm, every peer $v$ within depth $d^*$ determines its main flow $\xi(v)$.  In Fig.~\ref{fig:algorithmDescription},  flow-1 peers are shown in gray. Let $\setX_d$ be the set of gray peers in depth $d$, and $X_d=|\setX_d|$. By symmetry, the expected number of flow-1 peers in depth $d$ must be a half of the peers in the same depth, i.e., $X_d=E[\N{d}/2]$. The next proposition shows that when the arborescence up to depth $d^*$ is nearly a binary tree, the portion of flow-1 peers in each depth is concentrated on a half, $X_d/\N{d}\approx 1/2$:
\begin{proposition}\label{prop:colorOfPeers}
Conditioned on $\N{d}\geq 2 \phi_{d-1}\N{d-1}$ for all $d\leq d^*$, for any $\epsilon>0$ and sufficiently large $N$,
\begin{equation}
\prob{\left|\frac{X_d}{\N{d}}- \frac{1}{2}\right|<\epsilon, \forall d\leq d^*}\geq-\frac{2\log_2 N}{ \exp\left(\frac{(1-\epsilon)N^{1/3}}{4 \log_2^4N}\right)}.\NN
\end{equation}
\end{proposition}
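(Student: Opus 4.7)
The plan is to fix the shortest-path arborescence up to depth $d^*$ and induct on $d$. The key structural observation is that under the RFA algorithm, if $v\in\setN{d}$ has a unique parent $w$ at depth $d-1$ then $\chi(v)=\chi(w)$ deterministically, while if both parents $w,w'$ of $v$ lie at depth $d-1$ then $\chi(v)\in\{\chi(w),\chi(w')\}$ via an independent fair coin (the random tie-breaking in the $\arg\min$). Consequently, once we condition on the tree structure and on $\{\chi(w):w\in\setN{d-1}\}$, the indicators $I_v:=\mathbb{1}[\chi(v)=1]$ for $v\in\setN{d}$ are mutually independent, each in $[0,1]$. The base case $d=1$ is immediate since $X_1=1$ and $\N{1}=2$ by the definition of the two flow-sources.

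First, I would compute the conditional mean. For each $w\in\setN{d-1}$, let $a(w)$ count children of $w$ at depth $d$ whose only depth-$(d-1)$ parent is $w$, and $c(w)$ count those whose other parent also lies in $\setN{d-1}$. Setting $\alpha(w):=a(w)+c(w)/2$, a direct computation gives
\begin{equation}
E\bigl[X_d\mid\chi(w),\,w\in\setN{d-1}\bigr] \;=\; \sum_{w\in\setN{d-1}}\alpha(w)\,I_w,\qquad 0\le\alpha(w)\le 2,\qquad \sum_w\alpha(w)=\N{d}.\NN
\end{equation}
A conditional Hoeffding inequality applied to the independent variables $\{I_v\}_{v\in\setN{d}}$ then yields, for any $\eta>0$,
\begin{equation}
\prob{\bigl|X_d-\textstyle\sum_w\alpha(w)I_w\bigr|>\eta\N{d} \mid \chi(w),\,w\in\setN{d-1}} \;\le\; 2\exp(-2\eta^2\N{d}).\NN
\end{equation}
Using $\sum_w\alpha(w)=\N{d}$ and $\alpha(w)\in[0,2]$, one verifies the algebraic identity
\begin{equation}
\sum_{w}\alpha(w)I_w-\tfrac{\N{d}}{2} \;=\; 2\bigl(X_{d-1}-\tfrac{\N{d-1}}{2}\bigr) \;+\; \sum_{w}(\alpha(w)-2)\bigl(I_w-\tfrac{1}{2}\bigr),\NN
\end{equation}
which splits the bias of the conditional mean into an inductive term controlled by the hypothesis on $X_{d-1}$ and a ``non-binary-slack'' term bounded in absolute value by $\tfrac{1}{2}\sum_w(2-\alpha(w))=\N{d-1}-\N{d}/2\le(1-\phi_{d-1})\N{d-1}$, which is small on the conditioning event $\N{d}\ge 2\phi_{d-1}\N{d-1}$.

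Combining these ingredients inductively gives a recursion for the high-probability error $\epsilon_d$ satisfying $|X_d/\N{d}-\tfrac{1}{2}|<\epsilon_d$, and a union bound over the $d\le d^*=O(\log N)$ levels yields the full claim. The hard part will be choosing the sequence $(\epsilon_d)$ so the induction closes---$\epsilon_d$ must absorb both $\epsilon_{d-1}$ and the Hoeffding slack $\eta$ in spite of the $(1-\phi_{d-1})$ bias contribution---while also satisfying $\epsilon_{d^*}<\epsilon$ at the end. Since $\N{d}$ grows geometrically up to $d_1=\lceil(\log_2 N)/3\rceil$ and remains at least $N/\log^c N$ thereafter, the weakest Hoeffding tail occurs near $d_1$ where $\N{d}\approx N^{1/3}$; choosing $\eta$ on the order of $1/\log^2 N$ at each level (to absorb the $O(\log N)$-fold union-bound cost into the exponent) reproduces the stated rate $\exp\bigl(-(1-\epsilon)N^{1/3}/(4\log_2^4 N)\bigr)$, with the $2\log_2 N$ prefactor matching the union-bound cost over depths.
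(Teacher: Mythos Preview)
Your approach is correct but takes a genuinely different route from the paper. The paper does \emph{not} fix the arborescence and exploit the random tie-breaking; instead it conditions only on the level sizes $(\N{0},\dots,\N{d^*})$ and on $X_1,\dots,X_{d-1}$, and uses the residual randomness of the layer permutations. Concretely, it partitions $\setN{d}$ into $\setN{d,1}\cup\setN{d,2}$ according to which layer supplies the depth-$(d{-}1)$ parent and observes that, by the exchangeability of a random $1$-regular digraph, the parent of each $v\in\setN{d,m}$ is uniform without replacement over $\setN{d-1}$. Thus $X_{d,m}$ is hypergeometric with parameters $(\N{d-1},X_{d-1},\N{d,m})$, a hypergeometric tail bound gives $\bigl|X_{d,m}/\N{d,m}-X_{d-1}/\N{d-1}\bigr|<\epsilon_{d-1}$, and averaging over $m$ yields $\bigl|X_d/\N{d}-X_{d-1}/\N{d-1}\bigr|<\epsilon_{d-1}$ directly. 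This telescopes with no ``non-binary slack'' term at all, so the only accumulation is $\sum_{d\le d^*}\epsilon_{d-1}$; the choice $\epsilon_{d-1}=1/\log_2^{2+c/2}N$ finishes. What the paper's route buys is that it never needs to control $\sum_d(1-\phi_{d-1})$ or the product $\prod_d\phi_{d-1}^{-1}$.

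Your route also works, but the ``hard part'' you flag is exactly the summability of the bias: your recursion is $\epsilon_d\le \eta+\phi_{d-1}^{-1}\epsilon_{d-1}+(1-\phi_{d-1})/(2\phi_{d-1})$, so you must check that $\prod_{d>d_1}\phi_{d-1}^{-1}=1+o(1)$ and $\sum_{d>d_1}(1-\phi_{d-1})=o(1)$. Both hold because $1-\phi_{d-1}\le 3\cdot 2^{d-1}/(N-2^{d-1})$ for $d\ge d_2$, which is a geometric series with sum $O(1/\log^c N)$, and $1-\phi_{d-1}=N^{-1/9}$ on $[d_1,d_2)$ contributes $O(N^{-1/9}\log N)$. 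With this in hand your $\eta\asymp 1/\log^2 N$ choice indeed closes the induction and matches the stated tail. (Minor slip: you write ``$\N{d}$ \dots remains at least $N/\log^c N$ thereafter''; you mean at least $N^{1/3}$---only $\N{d^*}$ reaches $N/\log^c N$---but your next sentence shows you are using the correct $N^{1/3}$ scale.)
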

\jour{The proof is provided in Appendix~D in \cite{Kim13tech2}.}
\tech{The proof is provided in Appendix~\ref{append:prop:colorOfPeers}.}

In this subsection, we have shown that the arborescence up to $d^*$ is nearly a binary tree and
a half of the peers in each depth are flow-1 peers with high probability. Since every flow-1 peer in the arborescence is  connected from the source only with  $d^*$ or less flow-1 edges, the flow-1 peers in the arborescence are connected from the source in flow graph~1 within $d^*$ hops. This means that the flow-1 peers can receive flow-1 chunks with delay $d^*=O(\log N)$. In the next subsection, we will consider how these chunks can be disseminated to the peers beyond $d^*$ through the flow-1 peers in depth $d^*$.

\subsection{Distance to the Remaining Peers}

We now show that the maximum distance from the flow-1 peers within depth $d^*$ to  peers $v$ with $d>d^*$ in flow graph~1.

Let $\setS{0}$ be the set of the peers within depth $d^*-1$  as shown in Fig.~\ref{fig:algorithmDescription}, i.e.,
$\setS{0}=\{v\in V\;|\; d(v)<d^* \}.$
Let $\setS{1}$ be the set of the flow-1 peers in depth $d^*$, i.e.,
$\setS{1}=\{v \in V\; | \; d(v)=d^*, \chi(v)=1\}.$
We have shown that the number of the peers in depth $d^*$ is larger than $\frac{N}{\log^c N}$ with high probability in Proposition~\ref{prop:mainExpansion} and a half of them are flow-1 peers that belong to $\setS{1}$ in Proposition~\ref{prop:colorOfPeers} with high probability. We then iteratively define $\setS{h}\subset V$ for $h>1$ and $m=1,2$ as follows:
$$\setS{h} = \{v \; | v \notin \cup_{l=0}^{h-1} S_l,\;\exists m \mbox{ s.t. } f_m(v)=1, p_m(v)\in S_{h-1}\}.$$
In other words, $\setS{h}$ is the set of the peers not in  $\cup_{l=0}^{h-1}\setS{l}$ that are connected from  the peers in $\setS{h-1}$ through  flow-1 incoming edges. (See the example in  Fig~\ref{fig:algorithmDescription}.  We also define $\setS{\leq h}\triangleq \cup_{d=0}^h\setS{d}$ and $\setS{>h} \triangleq \setS{\leq h}^C$. For simplicity, let $\Sd{h}\triangleq|\setS{h}|$, $\Sd{\leq h}\triangleq |\setS{\leq h}|$, and $\Sd{>h}\triangleq|\setS{>h}|$.

Since the peers in $\setS{h}$ are connected from the peers in $\setS{h-1}$ through flow-1 edges, the distance from the peers in $\setS{0}$ to the peers in $\setS{h}$ in flow graph~1 is exactly $h$. Since the peers in $\setS{1}$ are $d^*$ hops away from the source in flow graph 1, the peers in $\setS{h}$ must be  $d^*+h$ hops from the source in flow graph 1. Therefore, our goal in this subsection is to show that $\setS{\leq O(\log N)}$ covers almost all peers in the network with high probability.

While we have used edge expansion in the previous subsection, we use the contraction of the number of remaining peers. Specifically, define contraction ratio $\gamma_h$ as
$$\gamma_h=\frac{\SC{h}}{\SC{h-1}}.$$
From the result in the previous subsection, the initial contraction ratio $\gamma_1$ is upper bounded by
$$\gamma_1=\frac{\SC{1}}{\SC{0}}\leq 1-\frac{1}{\log^c N}.$$
The contraction ratio can be interpreted as the fraction by which the number of  peers not within $h$ hops from the source in flow graph~1 decreases at each  $h$. We will first show that the sequence of $\gamma_1, \gamma_2, \cdots,$ is a martingale in Proposition~\ref{prop:martingale}, and then establish the following concentration result:
$$\gamma_1 \approx \gamma_2 \approx \cdots \approx \gamma_h.$$
This concentration result leads to the following:
\begin{align}
\SC{h}=& \left(\prod_{l=1}^h \gamma_l\right) \SC{0} \approx\gamma_1^h \SC{0}\leq
\left(1-\frac{1}{\log^cN}\right)^h N.\label{eq:remainingPeers}
\end{align}
Taking $h=\log N$, the R.H.S. of (\ref{eq:remainingPeers}) approaches  $ N/\exp({\log^{1-c}N})$, which converges to zero as $N$ increases. This implies that in a large network, only $o(1)$ fraction of peers do not belong to $\setS{\leq\log N}$, hence all the remaining  peers beyond depth $d^*$ are within $d^*+\log N$ hops from the source in flow graph 1.

We  consider the distribution of $S_1, S_2, \cdots.$ using the example in Fig.~\ref{fig:algorithmDescription}. Suppose  the outgoing edges from the peers in $\setS{0}$ and $\setS{1}$ are given as in the figure, and thus $\setS{2}$ is also determined. We find the distribution of $S_3$ by drawing random edges from $\setS{2}$. Recall that each peer beyond depth $d^*$ assigns flow~1 to one of its incoming edges uniformly at random. Among the peers in $\setS{>2}$, let $\setS{>2,m}$ be the set of the peers beyond depth $d^*$ that assign flow~1 to its $m$-th incoming edge, i.e.,
$$\setS{>2, m}=\{v\in \setS{>2}\;|\; f_m(v)=1 \},$$
and let $\SC{2,m}=|\setS{>2, m}|$. When we draw a layer-1 edge from a peer in $\setS{2}$, there is a layer-1 edge outgoing from every peer in $\setS{\leq 1}$. Hence, there are $\SC{1}$ peers that we can draw the layer-1 edge to. Among these peers, if we draw the edge to a peer in $\cap \SC{2,1}$, i.e., a peer that assigned flow~1 to its first incoming edge, the peer in $\Sd{2}$ and the chosen peer are connected through flow-1 edge, and thus the chosen peer will be added to $\setS{3}$. If we draw the edge to a peer $\SC{2,2}$ that assigned flow~2 to its layer-2 incoming edge, the peer will not be  added to $\setS{3}$ because the chosen  peer is connected through flow-2 edge.  Hence, there are $\SC{1}$ peers that we can draw layer-1 edges from $\Sd{1}$ to and drawing to a peer in $\SC{2,1}$ will be considered to be a success. Hence, $\Sd{3,1}$ is a hyper geometric random variable with parameters $(\SC{1}, \SC{2,1}, \Sd{2})$.
By symmetry,  $\Sd{3,2}$ is a hyper geometric random variable with $(\SC{1}, \SC{2,2}, \Sd{2})$. In general, we can conclude the following:
\begin{lemma}\label{lemma:hyperS}
Conditioned on $\Sd{0}, \cdots, \Sd{h}$, and  $(\SC{h,1},\SC{h,2})$,
$\Sd{h+1,1}$ and $\Sd{h+1,2}$ are independent hyper geometric random variables with parameters $(\SC{h-1}, \SC{h,1}, \Sd{h} )$ and $(\SC{h-1}, \SC{h,2}, \Sd{h})$, respectively.
\end{lemma}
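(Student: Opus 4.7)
The plan is to mimic the urn argument that the authors have already used twice in the section (once to derive the distribution of $\N{d+1}'$ and $\N{d+1}''$, and once informally in the paragraph immediately preceding the lemma statement). The key observations are the independence of layers granted by Proposition~1, and the fact that once we have revealed the subgraph up to and including the outgoing edges of $\setS{\leq h-1}$, the still-unrevealed outgoing edges from $\setS{h}$ behave, in each layer, like a uniform sampling without replacement over the sockets that are still free.

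First I would introduce, for each $m\in\{1,2\}$, the partial graph $\cH_m$ consisting of \emph{all} layer-$m$ edges that go out of $\setS{\leq h-1}$, together with the flow labels on every incoming edge of every peer. Conditioning on $\Sd{0},\ldots,\Sd{h}$ and $(\SC{h,1},\SC{h,2})$ is coarser than conditioning on $(\cH_1,\cH_2)$, so it suffices to prove the claim conditional on $(\cH_1,\cH_2)$ and then integrate out. By Proposition~1 the two layers are independent, and by the permutation construction of a 1-regular digraph, conditional on $\cH_m$, the layer-$m$ out-edges of the peers in $\setS{h}$ form a uniform random injection from $\setS{h}$ into the set $\setR_m$ of peers that do not yet have a layer-$m$ incoming edge. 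The set $\setR_m$ has size $\SC{h-1}$, since exactly the peers in $\setS{\leq h-1}$ have already been assigned a layer-$m$ in-edge from $\setS{\leq h-1}$.

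Next I would identify the ``successful'' targets for layer $m$. A peer $u\in\setR_m$ gets added to $\setS{h+1}$ via a layer-$m$ edge from $\setS{h}$ iff $u\in\setS{>h}$ and $f_m(u)=1$, i.e.\ iff $u\in \setS{>h,m}$. Hence the number $\Sd{h+1,m}$ of peers added through layer~$m$ equals the number of elements that a uniform random injection from a set of size $\Sd{h}$ into $\setR_m$ of size $\SC{h-1}$ lands in the distinguished subset $\setS{>h,m}$ of size $\SC{h,m}$. This is exactly the hypergeometric distribution with parameters $(\SC{h-1},\SC{h,m},\Sd{h})$, as claimed.

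Finally, independence of $\Sd{h+1,1}$ and $\Sd{h+1,2}$ follows from the independence of the two layers: conditional on $(\cH_1,\cH_2)$, the layer-1 random injection and the layer-2 random injection are independent, so the same is true of the two counts. Since a peer in $\setS{>h,1}$ cannot coincide with a peer in $\setS{>h,2}$ (each peer has a unique flow-$1$ incoming edge by the one-to-one flow assignment of Algorithm~2), there is also no over-counting issue: $\setS{h+1}$ partitions cleanly into the layer-$1$ and layer-$2$ contributions, and $\Sd{h+1}=\Sd{h+1,1}+\Sd{h+1,2}$. The one subtlety I would have to verify carefully is that conditioning on the summary statistics $(\Sd{0},\ldots,\Sd{h},\SC{h,1},\SC{h,2})$ rather than on the full graph $(\cH_1,\cH_2)$ does not spoil the uniform-injection claim. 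This is true because the distribution of the unrevealed edges depends on $(\cH_1,\cH_2)$ only through which peers are already ``in-saturated'' in each layer, and those sets are determined by $\setS{\leq h-1}$, hence in particular by $(\Sd{0},\ldots,\Sd{h})$; the labels $\SC{h,1}$ and $\SC{h,2}$ refer only to the independently chosen flow assignments of peers in $\setS{>h}$, which are independent of the unrevealed edges of the two layers. Handling this measurability detail cleanly is the only place where one has to be careful; the rest is the standard urn picture.
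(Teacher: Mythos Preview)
Your proposal is correct and follows the same urn/random-injection argument that the paper sketches in the paragraph immediately preceding the lemma (the paper offers no separate formal proof beyond that discussion). You are in fact more careful than the paper about the conditioning and the independence of the two layers. One small imprecision: you write that ``exactly the peers in $\setS{\leq h-1}$ have already been assigned a layer-$m$ in-edge''; the set of peers with a revealed layer-$m$ in-edge is $c_m(\setS{\leq h-1})$, not $\setS{\leq h-1}$ itself, but since each layer is a bijection the \emph{cardinality} is $|\setS{\leq h-1}|$ and hence $|\setR_m|=\SC{h-1}$ as you claim, so the argument goes through.
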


Using the properties of hyper geometric random variables, we can derive the following martingale property:
\begin{proposition}\label{prop:martingale}
The contraction ratios $\gamma_1, \gamma_2,\cdots$ form a martingale sequence, i.e.,
$$E[\gamma_{h+1}\;|\; \gamma_1, \gamma_2,\cdots \gamma_h]=\gamma_h.$$
\end{proposition}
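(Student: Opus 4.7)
The plan is to expand $\gamma_{h+1}$ in a form that makes the hypergeometric structure from Lemma~\ref{lemma:hyperS} usable, compute the conditional expectation by taking expectations inside the sum over flow indices, and then observe that the answer depends on the history only through $\gamma_h$. For concreteness I work with $M=2$; the argument is identical for general $M$.

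First I would write
\begin{equation}
\gamma_{h+1}=\frac{\SC{h+1}}{\SC{h}}=\frac{\SC{h}-\Sd{h+1}}{\SC{h}}=1-\frac{\Sd{h+1,1}+\Sd{h+1,2}}{\SC{h}}.\NN
\end{equation}
This replaces the unfriendly ratio by a linear expression in the counts $\Sd{h+1,m}$, each of which Lemma~\ref{lemma:hyperS} tells us is hypergeometric once we condition on the appropriate history.

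Next I would take the conditional expectation given the $\sigma$-algebra $\mathcal{H}_h$ generated by $\Sd{0},\Sd{1},\ldots,\Sd{h}$ together with the split $(\SC{h,1},\SC{h,2})$. Since $\SC{h}=\SC{h,1}+\SC{h,2}$ is $\mathcal{H}_h$-measurable, it pulls out of the expectation. By Lemma~\ref{lemma:hyperS}, conditioned on $\mathcal{H}_h$, $\Sd{h+1,m}$ is hypergeometric with parameters $(\SC{h-1},\SC{h,m},\Sd{h})$, so its conditional mean is $\Sd{h}\,\SC{h,m}/\SC{h-1}$. Summing over $m=1,2$ gives
\begin{equation}
E\!\left[\Sd{h+1,1}+\Sd{h+1,2}\,\big|\,\mathcal{H}_h\right]=\frac{\Sd{h}\bigl(\SC{h,1}+\SC{h,2}\bigr)}{\SC{h-1}}=\frac{\Sd{h}\,\SC{h}}{\SC{h-1}}.\NN
\end{equation}
Therefore
\begin{equation}
E[\gamma_{h+1}\,|\,\mathcal{H}_h]=1-\frac{\Sd{h}}{\SC{h-1}}=\frac{\SC{h-1}-\Sd{h}}{\SC{h-1}}=\frac{\SC{h}}{\SC{h-1}}=\gamma_h.\NN
\end{equation}
Notice that the right-hand side no longer depends on the split $(\SC{h,1},\SC{h,2})$, only on $\Sd{h}$ and $\SC{h-1}$. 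Hence by the tower property,
\begin{equation}
E[\gamma_{h+1}\,|\,\gamma_1,\ldots,\gamma_h]=E\!\left[E[\gamma_{h+1}\,|\,\mathcal{H}_h]\,\big|\,\gamma_1,\ldots,\gamma_h\right]=\gamma_h,\NN
\end{equation}
which is the claim.

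The only real subtlety — and the point I would flag as the main obstacle — is being careful about the filtration. One has to justify conditioning on the richer history $\mathcal{H}_h$ (which includes the per-edge flow-label split $\SC{h,m}$, determined by the random flow assignments made independently at each peer beyond depth $d^*$) before applying the hypergeometric lemma, and then collapse back down to $(\gamma_1,\ldots,\gamma_h)$ via the tower property. Because the two hypergeometric means add to $\Sd{h}\SC{h}/\SC{h-1}$ regardless of how $\SC{h}$ is split between $\SC{h,1}$ and $\SC{h,2}$, this collapsing step is automatic and no additional independence argument is needed.
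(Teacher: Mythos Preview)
Your proof is correct and follows essentially the same route as the paper: condition on the richer history $(\Sd{0},\ldots,\Sd{h},\SC{h,1},\SC{h,2})$, use Lemma~\ref{lemma:hyperS} to compute the hypergeometric means $E[\Sd{h+1,m}]=\Sd{h}\,\SC{h,m}/\SC{h-1}$, sum over $m$ and simplify to get $\gamma_h$, then invoke the tower property. If anything, your treatment of the filtration step is slightly more explicit than the paper's.
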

\jour{The proof is provided in Appendix~E in \cite{Kim13tech2}.}
\tech{The proof is provided in Appendix~\ref{append:prop:martingale}.}

The result implies that the expected ratio $\SC{d}/\SC{d-1}$ of the remaining peers at each hop $d$ remains $\gamma_1$ as $d$ increases.
 We extend the martingale result to the following stronger concentration result:
\begin{proposition}\label{prop:concentrationOnContraction}
Conditioned on $S_0, S_1, \cdots, S_h,$ and $(\SC{h,1},\SC{h,2})$, for any $\epsilon>0$ and sufficiently large $N$,
\tred{$$\prob{\gamma_{h+1}<\gamma_h+\epsilon}\geq 1- 2\exp\left(-\frac{\epsilon^2 \min(\SC{h,1},\SC{h,2})^2}{\Sh}\right)$$}
\tblue{
$$\prob{\gamma_{h+1}\leq\gamma_h+\epsilon}\geq 1- 2\exp\left(-\frac{\epsilon^2 \SC{h}}{2}\right)$$}
\end{proposition}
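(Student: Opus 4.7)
The plan is to translate the event $\{\gamma_{h+1}>\gamma_h+\epsilon\}$ into a one-sided deviation of $\Sd{h+1}=\Sd{h+1,1}+\Sd{h+1,2}$ from its conditional mean, and then apply a Hoeffding-type inequality to each of the two independent hypergeometric summands identified in Lemma~\ref{lemma:hyperS}. To carry out the first step, abbreviate the conditioning $(\Sd{0},\ldots,\Sd{h},\SC{h,1},\SC{h,2})$ by $\mathcal{H}$. Since $\SC{h+1}=\SC{h}-\Sd{h+1}$ and the martingale identity in Proposition~\ref{prop:martingale} gives $E[\gamma_{h+1}\mid\mathcal{H}]=\gamma_h$, one has $E[\Sd{h+1}\mid\mathcal{H}]=(1-\gamma_h)\SC{h}$, so the increment rewrites as $\gamma_{h+1}-\gamma_h=(E[\Sd{h+1}\mid\mathcal{H}]-\Sd{h+1})/\SC{h}$. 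Consequently $\{\gamma_{h+1}>\gamma_h+\epsilon\}=\{\Sd{h+1}<E[\Sd{h+1}\mid\mathcal{H}]-\epsilon\SC{h}\}$, which reduces the claim to a lower-tail bound on a sum of two independent hypergeometric variables.

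The main obstacle is obtaining, for each $m$, a tail bound on $\Sd{h+1,m}$ whose exponent scales with $\SC{h,m}$ rather than with the sample size $\Sh$ that appears in a naive application of Hoeffding. I would achieve this by passing to the complementary hypergeometric: the random variable $Y_m\triangleq\SC{h,m}-\Sd{h+1,m}$ counts how many peers of $\SC{h,m}$ are missed by the $\Sh$ drawn layer-$m$ edges, and is hypergeometric with parameters $(\SC{h-1},\SC{h-1}-\Sh,\SC{h,m})$, i.e., with sample size $\SC{h,m}$. The standard Hoeffding inequality for the hypergeometric distribution in this parametrization then gives
\[
\prob{\Sd{h+1,m}\leq E[\Sd{h+1,m}\mid\mathcal{H}]-t}=\prob{Y_m\geq E[Y_m]+t}\leq\exp\!\left(-\frac{2t^{2}}{\SC{h,m}}\right).
\]
This $\Sh\mapsto\SC{h,m}$ substitution is the heart of the improvement over the earlier (red) version of the statement, whose $\exp(-\epsilon^{2}\min(\SC{h,1},\SC{h,2})^{2}/\Sh)$ bound arises precisely from keeping $\Sh$ in the denominator.

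To finish, I would take $t=\epsilon\SC{h}/2$ in each coordinate and combine the two tails by a union bound. Whenever $E[\Sd{h+1,m}\mid\mathcal{H}]-\Sd{h+1,m}\leq\epsilon\SC{h}/2$ holds simultaneously for $m=1$ and $m=2$, adding the two inequalities gives $E[\Sd{h+1}\mid\mathcal{H}]-\Sd{h+1}\leq\epsilon\SC{h}$, so the bad event is contained in the union of the two one-sided tails. Thus
\[
\prob{\gamma_{h+1}>\gamma_h+\epsilon}\leq\sum_{m=1}^{2}\exp\!\left(-\frac{\epsilon^{2}\SC{h}^{2}}{2\SC{h,m}}\right)\leq 2\exp\!\left(-\frac{\epsilon^{2}\SC{h}}{2}\right),
\]
where the last step uses $\SC{h,m}\leq\SC{h,1}+\SC{h,2}=\SC{h}$, which implies $\SC{h}^{2}/\SC{h,m}\geq\SC{h}$. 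This delivers the stated one-sided concentration bound.
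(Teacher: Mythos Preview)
Your proof is correct and shares the paper's overall skeleton: translate $\{\gamma_{h+1}>\gamma_h+\epsilon\}$ into $\{\Sd{h+1}<E[\Sd{h+1}]-\epsilon\SC{h}\}$, split into the two coordinates $m=1,2$ with threshold $t=\epsilon\SC{h}/2$, and combine by a union bound. The genuine difference is in how the per-coordinate tail is controlled. The paper applies directly the Hoeffding-type inequality for the hypergeometric with exponent denominator $(\SC{h-1}-\Sh)(\Sh-1)\approx\SC{h}\Sh$, cancels one factor of $\SC{h}$ against the $t^{2}=(\epsilon\SC{h}/2)^{2}$ in the numerator, and then uses $\SC{h-1}/\Sh\geq 1$ to land on $\exp(-\epsilon^{2}\SC{h}/2)$. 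You instead exploit the hypergeometric symmetry to reparametrize $Y_m=\SC{h,m}-\Sd{h+1,m}$ with sample size $\SC{h,m}$, apply the plain $\exp(-2t^{2}/n)$ form of Hoeffding, and finish with $\SC{h,m}\leq\SC{h}$. Both routes are valid and reach exactly the same bound; your symmetry trick is somewhat more elementary in that it requires only the simplest version of Hoeffding's inequality, while the paper's route leans on the sharper variance-aware form but avoids the extra reparametrization step.
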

\jour{The proof is provided in Appendix~F in \cite{Kim13tech2}.}
\tech{The proof is provided in Appendix~\ref{append:prop:concentrationOnContraction}.}

From Proposition~\ref{prop:concentrationOnContraction}, the contraction ratio only increases slightly with the number of hops with high probability. Hence, this concentration result holds for $\gamma_1, \gamma_2, \cdots, \gamma_h$, the numbers of  remaining peers,  $\SC{1}, \SC{2}, \cdots, \SC{h}$, decrease exponentially. In the next proposition, we show that  $\SC{h}$ becomes smaller than $N/\exp(\log^{1-c}N)$ before $h<\log N$:
\begin{proposition}\label{prop:CoverageToRemainginPeers}
Let $h^*$ be the first $h$ that satisfies $\SC{h}< N/\exp
( \log^{1-c} N   )$. Conditioned on \tred{$S_0>N/\log^{c}N$,}
\tblue{$S_1>N/\log^{c}N$,}
\tred{$$\prob{h ^* <\log N}\geq 1-2\log N\exp\left(-\frac{N}{\exp(\log^{1-c}N)} \right).$$}
\tblue{
\begin{align}
&\prob{h ^* \leq (1+\epsilon) \log N}\NN\\
&\geq 1-2(1+\epsilon)\log N\exp\left(-\frac{N^{1/2}}{2\exp(\log^{1-c}N)} \right).\NN
\end{align}
}
\end{proposition}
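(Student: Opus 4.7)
The plan is to iterate the one-step concentration of Proposition~\ref{prop:concentrationOnContraction} over $h=1,\ldots,(1+\epsilon)\log N$ and multiply the resulting upper bounds on $\gamma_h$ to drive $\SC{h}$ below the target $N/\exp(\log^{1-c}N)$. The conditioning $\Sd{1}>N/\log^c N$ together with $\SC{0}\leq N$ gives the starting estimate $\gamma_1=1-\Sd{1}/\SC{0}\leq 1-1/\log^c N$; if each subsequent $\gamma_h$ stays essentially no larger, then a telescoping product yields $\SC{H}\leq N(1-(1-o(1))/\log^c N)^H$, which falls below $N/\exp(\log^{1-c}N)$ as soon as $H>\log N$.

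Concretely, I pick per-step slack $\epsilon'=N^{-1/4}$, set $H=\lceil(1+\epsilon)\log N\rceil$, and let $A_h=\{\gamma_{h+1}\leq\gamma_h+N^{-1/4}\}$. For $h<h^*$ the definition of $h^*$ forces $\SC{h}\geq N/\exp(\log^{1-c}N)$; since this event is measurable with respect to $\mathcal{F}_h=\sigma(S_0,\ldots,S_h,\SC{h,1},\SC{h,2})$, conditioning on $\mathcal{F}_h$ and applying Proposition~\ref{prop:concentrationOnContraction} with $\epsilon'$ yields, after taking expectations,
$$P\bigl[A_h^c,\ \SC{h}\geq N/\exp(\log^{1-c}N)\bigr]\leq 2\exp\left(-\frac{N^{1/2}}{2\exp(\log^{1-c}N)}\right).$$
A union bound over $h=1,\ldots,H-1$ then controls the bad event $E=\bigcup_h\{A_h^c,\ \SC{h}\geq N/\exp(\log^{1-c}N)\}$ by $2(1+\epsilon)\log N\exp(-N^{1/2}/(2\exp(\log^{1-c}N)))$. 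On $E^c$ I would argue by contradiction: assuming $h^*>H$ forces $\SC{h}\geq N/\exp(\log^{1-c}N)$ for all $h\leq H$, so every $A_h$ holds for $h\leq H-1$; telescoping then gives $\gamma_h\leq 1-1/\log^c N+HN^{-1/4}=1-(1-o(1))/\log^c N$ for all $h\leq H$, and
$$\SC{H}\leq \SC{0}\prod_{l=1}^H\gamma_l\leq N\exp\bigl(-(1-o(1))(1+\epsilon)\log^{1-c}N\bigr)<\frac{N}{\exp(\log^{1-c}N)}$$
for $N$ large enough (using $(1-o(1))(1+\epsilon)>1$), contradicting $\SC{H}\geq N/\exp(\log^{1-c}N)$.

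The main obstacle is the balancing of the choice of $\epsilon'$: it must be small enough that the cumulative drift $H\epsilon'$ is $o(1/\log^c N)$ so that the per-step contraction rate stays at $1-(1-o(1))/\log^c N$ with $(1+\epsilon)(1-o(1))>1$, yet large enough that $\epsilon'^2\SC{h}/2$ matches the stated deviation rate $N^{1/2}/(2\exp(\log^{1-c}N))$. The choice $\epsilon'=N^{-1/4}$ threads the needle, since $HN^{-1/4}=(1+\epsilon)N^{-1/4}\log N=o(1/\log^c N)$ whenever $0<c<1$. The only remaining subtlety is the $\mathcal{F}_h$-measurability of the indicator $\mathbf{1}_{\{\SC{h}\geq N/\exp(\log^{1-c}N)\}}$, which allows the union bound to be performed cleanly without worrying about the randomness of $\SC{h}$ sitting inside the exponent of Proposition~\ref{prop:concentrationOnContraction}.
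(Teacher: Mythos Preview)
Your proposal is correct and follows essentially the same approach as the paper: both arguments choose per-step slack $N^{-1/4}$, iterate Proposition~\ref{prop:concentrationOnContraction} over $h\leq(1+\epsilon)\log N$ while using the lower bound $\SC{h}\geq N/\exp(\log^{1-c}N)$ for $h<h^*$ to control the exponent, telescope the $\gamma_h$'s from $\gamma_1\leq 1-1/\log^c N$, and multiply out $\SC{H}=\SC{0}\prod_{l}\gamma_l$ to force the contradiction. The only difference is packaging: the paper sets up events $\cA{h}=\{\gamma_h\leq\gamma_{h-1}+N^{-1/4}\}$ and $\cB{h}=\{h^*=h\}$ and proves the needed inequality by an explicit induction on $h$, whereas your union-bound-plus-contradiction via the $\mathcal{F}_h$-measurability of $\{\SC{h}\geq N/\exp(\log^{1-c}N)\}$ is a slightly cleaner route to the same bound.
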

\jour{The proof is provided in Appendix~G in \cite{Kim13tech2}.}
\tech{The proof is provided in Appendix~\ref{append:prop:CoverageToRemainginPeers}.}

In the previous section, we have shown that flow-1 peers in $\setS{0}$
are within $d^*=O(\log N)$ hops away from the source in flow graph~1. Here, we have shown that ($1-o(1)$) fraction
 of peers beyond depth $d^*$ are within $\log N$ hops away from the flow-1 peers in $S_0$ in flow graph~1. Hence,
  all the aforementioned peers are within $O(\log N)$ hops in flow graph~1. In the next subsection, we show that the flow-2 peers in $\setS{0}$ are also connected in flow graph~1 from the source within $O(\log N)$ hops.

\subsection{Distance to Flow~2 Peers }

Previously, we have shown that all the flow-1 peers within depth $d^*$ and almost all peers below depth $d^*$ are connected from the source in flow graph~1 with high probability. In this subsection, we consider how the remaining peers, the flow-2 peers within depth $d^*$, are connected in flow graph~1. By showing that most of them are  connected from the peers below depth $d^*$ with a flow-1 incoming edge with high probability, we show that the flow-2 peers are also connected in flow graph~1 within $O(\log N)$ hops.

We provide the basic idea of the proof and refer to our online technical report for the details.
As we can see in Fig.~\ref{fig:algorithmDescription}, all the outgoing edges from flow-2 peers in $\setS{0}$ are all assigned flow~2. Recall that every peer has one flow-1 incoming edge under the RFA algorithm. Hence, all the flow-1 edges incoming to the flow-2 peers must come from the flow-1 peers in $\setS{0}$ or from the peers below depth $d^*$. However, we have shown in Proposition~\ref{prop:mainExpansion} that $\N{d}>2(1-\phi_d) \N{d-1}$ with high probability. This means that the number of the  flow-1 edges that begin at the flow-1 peers in depth $d-1$ and end at flow-2 peers within depth $d$ and the flow-2 edges that begin at flow-2 peers in depth $d-1$ and end at flow-2 peers within depth $d$ is no larger than $2\phi\N{d-1}$, which is a small fraction compared to all the edges outgoing from the peers in depth $2$. In Fig.~\ref{fig:algorithmDescription}, the second  edge outgoing from the second peer in depth~2 corresponds to this case. Overall, only a small fraction $2\max \phi_d$ of edges beginning at the peers within depth $d^*$ end at the peers with different flows. Hence, most flow-1 incoming edges of the flow-2 peers within depth $d^*$ come from the peers below depth $d^*$, most of which are $d^*+h^*$ hops away from the source in flow graph~1. Hence, we can conclude that almost all flow-2 peers in depth 2 are within $d^*+h^*+1$ hops away from the source in flow graph~1, where $
d^*+h^*+1=O(\log N)$.

\section{Simulation Results}\label{sec:simulationResults}
\begin{figure}[t]
\centering
\subfigure[Number of Disconnected Peers]{\label{fig:disconnectedPeers}
\includegraphics[width=0.9\mylengthNarrow]{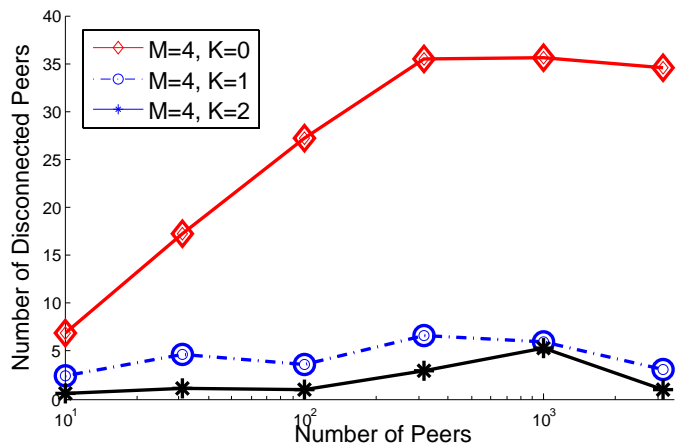}} 

\subfigure[The maximum delay]{\label{fig:MaxDistance}
\includegraphics[width=0.9\mylengthNarrow]{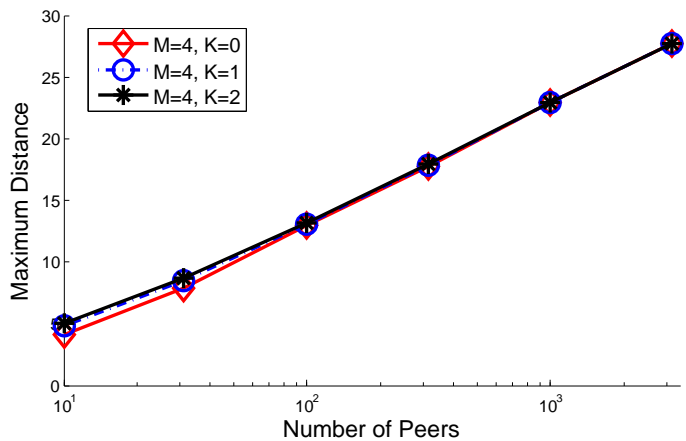}} 

\caption{Simulation results for  $N=10, 31, 100, 316, 1000, 3163$, $M=4$, and $K=0, 1, 2$, where $N$ is the number of the peers in the network,  $M$ is the number of layers, $K$ is the number of extra layers}
\label{fig:simulationResults}
\end{figure}

In the previous section, our P2P algorithm is proven to be able to  disseminate real-time video at the optimal rate to all peers, except $o(1)$ fraction of  peers,  with $O(\log N)$ delay.
We call the small fraction of peers disconnected peers. In practice, it is important to ensure that no peer is disconnected. Further, our results are asymptotic in nature, and it is important to ensure good performance even when $N$ is small. In this section, we address these issues, while a more comprehensive implementation can be found in \cite{Kim13tech3}.

Besides the $M$ layers, we assume that the peer-pairing algorithm constructs another layer, and thus each peer has an $(M+1)$-th parent and child. The chunk dissemination over the first $M$ layers is  exactly the same as before. If there are disconnected peers that cannot receive some flows, they request the missing flow from their $(M+1)$-th parents, if the parent is receiving the flow. For example, in Fig.~\ref{fig:flowgraph}, peers $i$ and $j$ that cannot receive flow-1 chunks request flow~1 from their $(M+1)$-th parents. If only one of them succeeds in receiving flow~1, then the other peer will also receive flow~1 through its flow-1 incoming edge. For this, the $(M+1)$-th parents of the disconnected peers should contribute additional unit bandwidth to their $(M+1)$-th outgoing edges. However, since the fraction of the disconnected peers is $o(1)$, only $o(1)$ fraction of peers should contribute the additional bandwidth. In practice, we may need more than one extra layer to make sure that there are no disconnected peers. In this section, we study the number of extra layers required through simulations.

We plotted the number of disconnected peers and the maximum depth in Fig~\ref{fig:simulationResults}, changing network size $N$,  the number $M$ of layers, and the number $K$ of extra layers. For each setting, we collected  simulation results from 100 samples and find the average value.
In Fig.~\ref{fig:disconnectedPeers}, we can observe that the number of disconnected peers dramatically decreases when  more extra layers are used. In particular, there are 5 or less disconnected peers regardless of the network size $N<3163$. The line with $(M=4, K=0)$ represents the number of disconnected peers under our original algorithm. When we use extra layers, these peers request missing flows from the parents in the extra layers, which requires  the parents to spend 25\% more bandwidth for each extra layer.  However, the number of such parents contributing more bandwidth is small (less than 35 for $N\leq 3163$) , and thus the impact of this additional bandwidth consumption to the network is negligible. In Fig.~\ref{fig:MaxDistance}, we have plotted the delay to all the connected peers as a function of $N$. The $x$ axis is in log scale and so the figure confirms the $O(\log N)$ bound on the delay. Further, it shows that even if the extended algorithm connects disconnected peers using extra layers, the maximum distance hardly increases.

\section{Conclusions}

We have presented an algorithm for transmitting streaming data in a peer-to-peer fashion over a network formed by the superposition of two random 1-regular digraphs. We showed that the algorithm can deliver chunks to all but a negligible fraction of nodes with a delay of $O(\log N)$ with high probability, where $N$ is the number of peers in the network. This improves upon an existing result where the constant in the $O(\log N)$ delay goes to infinity when the streaming rate approaches the capacity of the network.
The practical implications of the result have been studied in a real implementation in \cite{Kim13tech3}.

\appendix
\subsection{Proof of Lemma~\ref{lemma:2-ary}}
\label{append:lemma:2-ary}

To satisfy $\Nd=2\N{d-1}$, we must have $\Nd'=\Nd''=\N{d-1}$. We first
find the probability that  the $N_{trial}$ balls drawn without replacement from a jar containing $N_{all}$ balls including $N_{red}$ red balls are all red. Since the number $X$ of the red balls among the drawn balls is a hyper geometric random variable with ($N_{all}$, $N_{red}$, $N_{trial}$), we can derive the probability using \cite[Prob. (c), page 11]{Dubhashi2009}:
\begin{equation}
\prob{X=N_{trial}}=\frac{\comb{N_{red}}{N_{trial}}}{\comb{N_{all}}{N_{trial}}}\geq \left(\frac{N_{red}-N_{trial}+1}{N_{all}-N_{trial}+1}\right)^{N_{trial}}.\label{eq:hyperAllred}
\end{equation}
The second term can be seen as the number of possible ways to pick $N_{trial}$ red balls over the number of possible ways to pick any $N_{trial}$ balls.

Recall that $\Nd'$ is a hyper geometric random variable with ($\NC{d-2}$, $\NC{d-1}$, $\N{d-1}$). Applying $\Nd'$ to (\ref{eq:hyperAllred}), we have the following bound: conditioned on $\N{0}, \cdots, \N{d-1}$,
$$\prob{\Nd'=\N{d-1}}\geq
\left(\frac{\NC{d-1}-\N{d-1}+1}{\NC{d-1}+1}  \right)^{\N{d-1}}.$$
Conditioned on $\Nd'=\N{d-1}$, $\Nd''$ is also a hyper geometric random variable with ($\NC{d-2}$, $\NC{d-1}-\N{d-1}$, $\N{d-1}$). Hence, we also have
\begin{align}
&\prob{\Nd''=\N{d-1}\;|\; \Nd'=\N{d-1}}\NN\\
&\geq \left(\frac{(\NC{d-1}-\N{d-1})-\N{d-1}+1}{\NC{d-1}+1}  \right)^{\N{d-1}}.\NN
\end{align}
Combining both, we conclude the following bound: conditioned on
$\N{0}, \cdots, \N{d-1}$,
\begin{align}
&\prob{\Nd=2\N{d-1}}=\prob{\Nd'=\N{d-1}, \Nd''=\N{d-1}}\NN\\
&=\prob{\Nd''=\N{d-1} \;|\;\Nd'=\N{d-1}}\prob{\Nd'=\N{d-1}}\NN\\
&\geq \left( 1-\frac{2\N{d-1}}{\NC{d-1}+1}\right)^{2\N{d-1}}>1-\frac{4\N{d-1}^2}{\NC{d-1}+1}\NN.\NN
\end{align}
Note that if $\N{h}=2\N{h-1}$ for all $0<h\leq d$, it is easy to see that $\N{h}=2^h$ and $\N{\leq h}=2^{h+1}-1$ for all $h\leq d$. Using this, we derive the following probability:
\begin{align}
&\prob{\N{h}=2\N{h-1}, \forall h\leq d}\NN\\
&=\prod_{h=1}^{ d}\prob{\N{h}=2\N{h-1}\;|\;\N{l}=2\N{l-1},\forall l<h}\NN\\
&\geq\prod_{h=1}^{d}  \left( 1-\frac{4\N{h-1}^{2}}{N-\N{\leq h-1}+1}\right)
\geq  \left( 1-\frac{4\N{d-1}^{2}}{N-\N{\leq d-1}+1}\right)^d
\NN\\
&>
\left( 1-\frac{2^{2d}}{N-2^d}\right)^d>  1-\frac{d\cdot 2^{2d}}{N-2^d}\NN,
\end{align}
which is the result of this lemma.

\subsection{Proof of Proposition~\ref{prop:concentrationOnNd}}
\label{appex:prop:concentrationOnNd}

Since $\N{d+1}'$ is  a hyper geometric random variable with  $(\NC{d-1}, \NC{d}, \N{d})$, the following concentration result holds \cite[page 98]{Dubhashi2009}:
\begin{align}
&\prob{ |\Ndone'-E[\Ndone']|> t\;|\;\N{0},\cdots, \Nd}\NN\\
&\leq \exp{\left[-2 \frac{(\NC{d-1}-1) t^2}{\NC{d}(\Nd-1)}\right]}\leq \exp{\left[-2 \frac{\NC{d-1}t^2}{\NC{d}\Nd}\right]}\NN\\
&\leq \exp{\left[-2 \frac{t^2}{\Nd}\right]}\leq
\exp{[-2\sigma_d^2 N_d]} \NN\label{eq:hyper1}
\end{align}
where  $t= (1-\frac{\N{d}}{\NC{d-1}}-\phi)\Nd\geq    \sigma_d\Nd$. From (\ref{eq:mean_of_Nd'}), we can infer that if  $\Ndone'< \phi \Nd$, then $|\Ndone'-E[\Ndone']|>t$. Hence, we have
\begin{equation}
\prob{ \Ndone'< \phi \Nd\;|\;\N{0},\cdots, \Nd}
\leq \exp{[-2\sigma_d^2 N_d]}.\label{eq:concentration1}
\end{equation}

Similarly, we can derive the concentration result on $\Ndone''$, which is a hyper geometric random variable with $(\NC{d-1}, \NC{d}-\N{d+1}', \N{d})$:
\begin{align}
&\prob{ |\Ndone''-E[\Ndone'']|>s\;|\;\N{0},\cdots, \Nd, \Ndone'}\NN\\
&\leq \exp{\left[-2 \frac{(\NC{d-1})s^2}{(\NC{d})\Nd}\right]}\leq
\exp{[-2\sigma_d^2 N_d]} \NN
\end{align}
where $s=(1-\frac{\N{d}+\Ndone'}{\NC{d-1}}-\phi)\Nd
\geq \sigma_d \Nd$. Since $ |\Ndone''-E[\Ndone'']|>s$ if 
$\Ndone''< \phi \Nd$ from (\ref{eq:mean_of_Nd''}), we have
\begin{equation}
\prob{ \Ndone''< \phi \Nd\;|\;\N{0},\cdots, \Nd}
\leq \exp{[-2\sigma_d^2 N_d]},\label{eq:concentration2}
\end{equation}
where the upper bound is independent of $\Ndone'$.

Combining (\ref{eq:concentration1}) and (\ref{eq:concentration2}), we can conclude the following result: conditioned on $\N{0}, \N{1},\cdots, \N{d}$,
\begin{align}
&\prob{\Ndone\geq 2\phi \Nd}\geq \prob{\Ndone'\geq \phi \Nd, \Ndone''\geq \phi \Nd}\NN\\
&=\prob{\Ndone''\geq \phi \Nd|\Ndone'\geq \phi \Nd}
\prob{\Ndone'\geq \phi \Nd}\NN\\
&\geq 1-2\exp{[-2 \sigma_d^2 \Nd ]},\NN
\end{align}
which is the result of this proposition.

\subsection{Proof of Proposition~\ref{prop:mainExpansion}}
\label{append:prop:mainExpansion}

We fist show that if $\N{d}\geq 2\phi_{d-1}\N{d-1}$ for all $d\leq d^*$, then
$\N{d^*}\geq (1-\epsilon)N/\log^c N$ is automatically satisfied for sufficiently large $N$.  This allows us to focus on the probability of $\N{d}\geq 2\phi_{d-1}\N{d-1}$ for all $d\leq d^*$ to prove this proposition.

Suppose $\N{d}\geq 2\phi_{d-1}\N{d-1}$ for all $d\leq d^*$. We then have
\begin{align}
&\N{d^*}\geq (\prod_{d=0}^{d^*-1}2\phi_d)\N{0}=2^{d^*}\prod_{d=0}^{d_1-1} \phi_d \cdot\prod_{d=d_1}^{d_2-1}\phi_d \cdot\prod_{d=d_2}^{d^*-1}\phi_d\NN\\
&\geq \frac{N}{\log^c N}\left(1-N^{-\frac{1}{9}}\right)^{d_2-d_1}\left(\frac{\NC{d^*-1}}{\NC{d_2-1}} \right)^3\label{eq:lastOne}
\end{align}
Since $d_2=O(\log N)$ is much smaller than $N^{1/9}$,
 the second term in (\ref{eq:lastOne}) converges to 1 as $N$ increases. Since
\begin{equation}
\N{d}\leq 2^d \mbox{ and } \N{\leq d-1}\leq \prod_{h=0}^{d-1} 2^h=2^{d}-1,\label{eq:maximumWithin}
\end{equation} we have
$$\frac{\NC{d^*-1}}{\NC{d_2-1}} = \frac{N-\N{\leq d^*-1}}{N-\N{\leq d_2-1}}
\geq \frac{N-\N{\leq d^*-1}}{N}\geq 1-
\frac{2}{\log^c N}.$$
Hence, the third term in (\ref{eq:lastOne}) is lower bounded by $(1-2/\log^c N)^3$, which also converges to one. Thus, for any $\epsilon>0$ and sufficiently large $N$, we have $\N{d^*}\geq (1-\epsilon)\frac{N}{\log^c N}.$

To prove this proposition,  we only need to show
$$\prob{\N{d}\geq 2\phi_{d-1} \N{d-1},\forall d\leq d^*}\geq 1-\frac{2\log_2 N}{N^{2/3}}.$$
 Applying $d_1$ to Lemma~\ref{lemma:2-ary}, we can derive
\begin{align}
&\prob{\N{d}\geq 2\phi_{d-1} \N{d-1}, \forall d\leq d_1}\NN\\
&=\prob{\N{d}= 2 \N{d-1}, \forall d\leq d_1}\geq 1-\frac{d_1 2^{2d_1}}{N-2^{d_1}}\NN\\
&\geq 1-\frac{(\frac{1}{3}\log_2 N +1)\cdot 4 N^{\frac{2}{3}}}{N-2N^{\frac{1}{3}}}\geq 1-\frac{2\log_2 N}{N^{\frac{1}{3}}}\label{eq:largestProb}
\end{align}

For $d_1< d \leq d_2$, suppose $ \N{h}\geq 2\phi_{h-1}\N{h-1},\;\;\forall h<d$.
Since $2^{d-1}\leq2^{d_2-1}<N^{5/6}$ and $\N{d}>N_{d_1}=2^{d_1}\geq N^{1/3}$, we have the following probability bound from  Proposition~\ref{prop:concentrationOnNd}:
\begin{align}
&q(d)\triangleq\prob{\Nd< 2\phi_{d-1}\N{d-1}| \N{h}\geq 2\phi_{h-1}\N{h-1},\;\;\forall h<d}\NN\\
&\leq 2\exp{\left[ -2 \N{d-1} \left(N^{-\frac{1}{9}}- \frac{2\N{d-1}}{\NC{d-2}}\right)^2\right]}\NN\\
&\leq 2\exp{\left[ -2 \N{d-1} \left(N^{-\frac{1}{9}}- \frac{2\cdot 2^{d-1}}{N-2^{d-1}}\right)^2\right]}\NN\\
&\leq 2\exp{\left[-2N^{\frac{1}{3} }(N^{-\frac{1}{9}}(1-o(1)))^2\right]}\NN\\
&\leq 2\exp{\left[-2N^{\frac{1}{9} }(1-o(1))\right]}.\label{eq:finalBound}
\end{align}

For $d_2< d\leq d^*$, suppose $\N{h}\geq 2\phi_{h-1}\N{h-1}$ for all $h<d$. From Proposition~\ref{prop:concentrationOnNd}, we have
\begin{align}
&q(d)
\leq 2\exp{\left[ -2 \N{d-1} \left(1-\frac{2\N{d-1}}{\NC{d-2}}-\left( \frac{\NC{d-1}}{\NC{d-2}}\right)^3\right)^2\right]}\NN\\
&= 2\exp{\left[ -2  \N{d-1}\left(\frac{\N{d-1}}{\NC{d-2}}-3\left(\frac{\N{d-1}}{\NC{d-2}}\right)^2+\left(\frac{\N{d-1}}{\NC{d-2}}\right)^3\right)^2\right]}\NN\\
&= 2\exp{\left[ -2  \frac{\N{d-1}^3}{\NC{d-2}^2}\left(1-3\frac{\N{d-1}}{\NC{d-2}}+\left(\frac{\N{d-1}}{\NC{d-2}}\right)^2\right)^2\right]}\NN\\
&\leq 2\exp{\left[ -2  \frac{\N{d-1}^3}{\NC{d-2}^2}(1-o(1))\right]}\NN\\
&\leq
2\exp{\left[ -\frac{2 \N{d-1}^3}{N^2}(1-o(1))\right]}\label{eq:upperbound3}
\end{align}
for sufficiently large $N$ because
$$\frac{\N{d-1}}{\NC{d-2}}\leq \frac{2^{d-1}}{N-2^{d-1}}\leq \frac{2^{d^*-1}}{N-2^{d^*-1}}=o(1).$$
Note that
\begin{align}
\N{d-1}&\geq \N{d_2}=2^{d_2-d_1}\phi_{d_2-1}\cdots\phi_{d_1}\N{d_1}\NN\\
&\geq 2^{d_2}(1-N^{-1/9})^{d_2-d_1}\geq N^{5/6}(1-o(1)).\NN
\end{align}
Using this, the probability in (\ref{eq:upperbound3}) is upper bounded by
\begin{align}
q(d)\leq2 \exp{\left[-2(1-o(1)) N^{\frac{1}{2}}\right]}.\label{eq:upperboun4}
\end{align}

Combining (\ref{eq:largestProb}), (\ref{eq:finalBound}), and (\ref{eq:upperboun4}), we can conclude that $\Nd< \phi_{d-1}\N{d-1}$ for $d\leq d^*$ with high probability:
\begin{align}
&\prob{\Nd\geq  2\phi_{d-1}\N{d-1},\;\forall d\leq d^*}\NN\\
=&\prob{\Nd\geq  2\phi_{d-1}\N{d-1},\;\forall d\leq d_1}\cdot\NN\\
&\prod_{d=d_1+1}^{d^*}\prob{\Nd\geq \phi_{d-1}\N{d-1}| \N{h}\geq 2\phi_{h-1}\N{h-1},\;\forall h<d}\NN\\
=&\prob{\Nd\geq  2\phi_{d-1}\N{d-1},\;\forall d\leq d_1}
\prod_{d=d_1+1}^{d_2}q(d)\prod_{d=d_2+1}^{d^*}q(d)\NN\\
\geq&\left(1-\frac{2\log_2 N}{N^{\frac{1}{3}}}\right)\left(1-2\exp{\left[-2N^{\frac{1}{9} }(1-o(1))\right]}\right)^{d_2-d_1}\cdot\NN\\
&
\left(1-2\exp{\left[-2 N^{\frac{1}{2}}(1-o(1))\right]}\right)^{d^*-d_2}\NN\\
\geq& 1-\frac{2\log_2 N}{N^{\frac{2}{3}}}(1+o(1)),\NN
\end{align}
which is equivalent to the result of this proposition.

\subsection{Proof of Proposition~\ref{prop:colorOfPeers}}
\label{append:prop:colorOfPeers}

Throughout this proof, we  fix $\N{0}, \cdots, \N{d^*}$ such that $\N{d}\geq 2 \phi_{d-1} \N{d-1}$ for $d\leq d^*$. Conditioned on these fixed $\N{d}$'s, we show that approximately a half of the peers in each depth are flow-1 peers.

We have shown in Lemma~\ref{lemma:2-ary} that the arborescence up to depth $d_1$ is binary. Hence,
 the number of flow-1 peers is the same as that of flow-2 peers in each depth $d\leq d_1$. Therefore, $X_d=\N{d}/2$ for all $d\leq d_1$.

\begin{figure}[ht]
\centering
\subfigure[Before drawing incoming edges to the peers in depth $d$]{\label{fig:colorOfPeers1}
\includegraphics[width=0.9\mylengthNarrow]{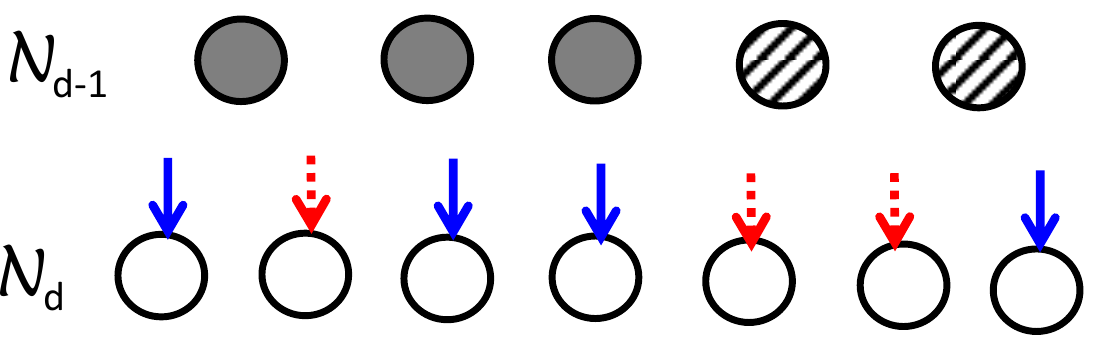}} 

\subfigure[After drawing layer-1 edges incoming to the peers in  $\setN{d,1}$]{\label{fig:colorOfPeers2}
\includegraphics[width=0.9\mylengthNarrow]{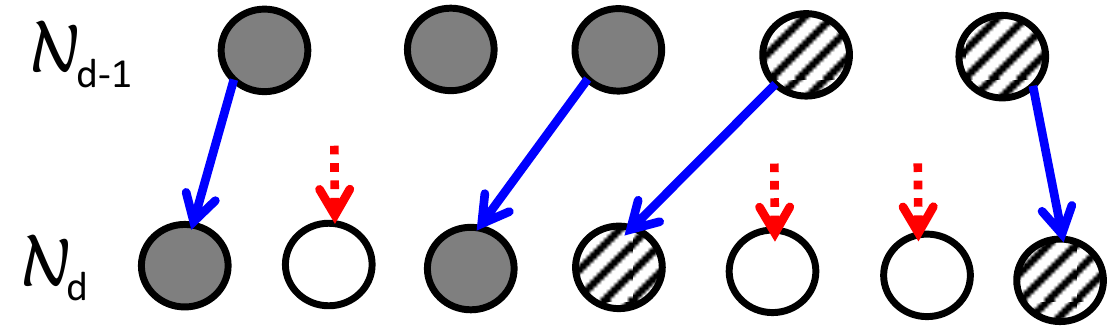}} 

\subfigure[After drawing layer-2 edges incoming to the peers in  $\setN{d,2}$]{\label{fig:colorOfPeers3}
\includegraphics[width=0.9\mylengthNarrow]{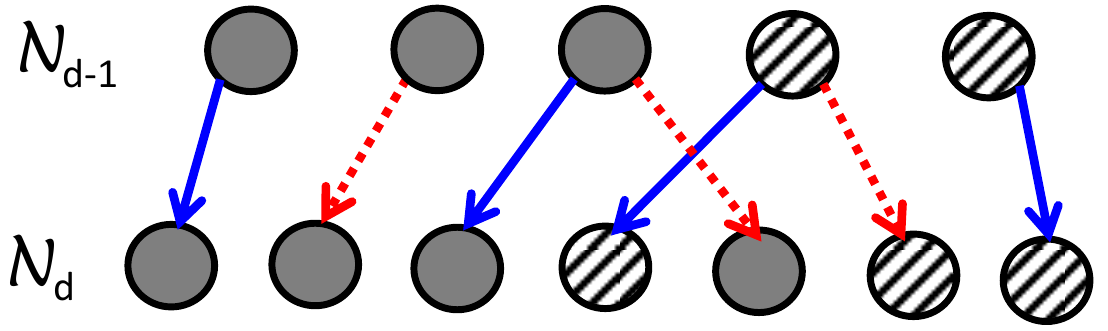}} 

\caption{Drawing random edges incoming to the peers in depth 2: each peer in depth $d$ inherit the main flow $\chi(v)$ from the parent in depth $d-1$: the solid (dotted) lines represent layer-1 (layer-2) edges. Gray peers represent flow-1 peers ($\chi(v)=1$) while the shaded peers represent flow-2 peers ($\chi(v)=2$).}
\end{figure}

For $d_1<d\leq d^*$, suppose that $X_1, \cdots, X_{d-1}$ are given. Under the RFA algorithm, every peer $v$ within depth $d^*$ determines $m^*(v)$ and inherits $\chi(v)$ from the parent over the $m^*(v)$-th incoming edge. Let $\setN{d,m}$ be the set of peers $v$ in $\setN{d}$ that are connected from the peers in $\setN{d-1}$ with layer-$m$ ($m$-th) edges, i.e., $\setN{d,m}=  \{ v \in \setN{d} \; |\; m^*(v)=m\}$, and let $\N{d,m}=|\setN{d,m}|$. In the example in Fig~\ref{fig:colorOfPeers1}, there are four peers in  $\setN{d,1}$ and three peers in $\setN{d,2}$. Note that the layer-1 incoming edge to a peer in  $\setN{d,1}$ can begin at any peer in $\setN{d-1}$ equally likely. Further,  two layer-1 edges cannot begin at the same peer. Hence, choosing the starting points of the layer-1 incoming edges to $\setN{d,1}$ is equivalent to choose $\N{d,1}$ peers from $\setN{d-1}$ without replacement. Hence, the number  $X_{d,1}$ of the layer-1 edges that begin at flow-1 (gray) peers in $\setN{d-1}$ is a hyper geometric random variable with parameter ($\N{d-1}, X_{d-1}, \N{d,1}$). Since drawing layer-2 edges incoming to $\setN{d,2}$ is independent of drawing layer-1 edges,  the number  $X_{d,2}$ of the layer-2 edges that begin at flow-1 (gray) peers in $\setN{d-1}$ is a hyper geometric random variable with parameter ($\N{d-1}, X_{d-1}, \N{d,2}$). Hence, we can find the following concentration result from \cite[page 98]{Dubhashi2009}:
\begin{align}
&\prob{ \left|X_{d,m}-\frac{X_{d-1}}{\N{d-1}}\N{d,m} \right|>\epsilon_{d-1} \N{d,m} \;|\;X_h, \forall h<d } \NN\\
&\leq \exp\left(- \frac{2(\N{d-1}-1)\left(\epsilon_{d-1} \N{d,m} \right)^2}{(\N{d-1}-\N{d,m})(\N{d,m}-1)}
 \right)\NN\\
 &\leq \exp\left(- \frac{2\N{d-1}
 \epsilon_{d-1}^2 \N{d,m} }{(\N{d-1}-\N{d,m})}
  \right)\label{eq:colorConcentration}
\end{align}
Since $\N{d}\geq 2\phi_{d-1}\N{d-1}$, we have
\begin{align}
&\N{d,1}+\N{d,2}\geq 2 \phi_{d-1}\N{d-1}\NN\\
\Leftrightarrow& \N{d,2}-(2\phi_{d-1}-1)\N{d-1}
\geq \N{d-1}-\N{d,1}\NN\\
\Rightarrow &\N{d-1}-(2\phi_{d-1}-1)\N{d-1}
\geq \N{d-1}-\N{d,1}\NN\\
\Leftrightarrow& 2(1-\phi_{d-1})\N{d-1}
\geq \N{d-1}-\N{d,1}.\label{eq:upper_phi1}
\end{align}
Similarly, we have
\begin{align}
&\N{d,1}+\N{d,2}\geq 2 \phi_{d-1}\N{d-1}\NN\\
\Leftrightarrow& \N{d,1}\geq 2\phi_{d-1}\N{d-1}- \N{d,2}\NN\\
\Rightarrow & \N{d,1}\geq (2\phi_{d-1}-1)\N{d-1}.\label{eq:upper_phi2}
\end{align}
Since the minimum of $\phi_{d}$ for $d<d^*$ is attained at $d^*-1$, we have
\begin{equation}
\phi_{d-1}\geq \phi_{d^*-1}\geq  \left(1-\frac{2^{d^*-1}}{N-2^{d^*-1}}\right)^3\geq
1-\frac{3}{\log^c N-1}\label{eq:upper_phi3}.
\end{equation}
Further, we have $\N{d-1}\geq \N{d_1}\geq N^{1/3}$.
Applying this, (\ref{eq:upper_phi1}), (\ref{eq:upper_phi2}), and (\ref{eq:upper_phi3}) to (\ref{eq:colorConcentration}), we have the following bound:
\begin{align}
&\prob{ \left|\frac{X_{d}}{\N{d}}-\frac{X_{d-1}}{\N{d-1}} \right|>\epsilon_{d-1} \;|\;X_h, \forall h<d   } \NN\\
&\leq\prob{ \left|\frac{X_{d,m}}{\N{d,m}}-\frac{X_{d-1}}{\N{d-1}} \right|>\epsilon_{d-1},\mbox{ for some }m \;|\;X_h, \forall h<d   } \NN\\
 &\leq 2\exp\left(- \frac{2
  \epsilon_{d-1}^2 (1-o(1))N^{1/3}\log^cN}{3}
   \right).\label{eq:colorConcentration3}
  \end{align}

Taking $\epsilon_d=1/\log_2^{2+c/2}N$ and $\cA{h}=\left|\frac{X_{h}}{\N{h}}-\frac{X_{h-1}}{\N{h-1}} \right|\leq\epsilon_{h-1}$. Then, we have the following
\begin{align}
&\prob{ \cA{1}, \cdots, \cA{d^*}}=\prod_{d=d_1+1}^{d^*}
\prob{\cA{d}\;|\;\cA{h}, \forall h<d}\NN\\
&\geq 1-2\log_2 N\exp\left(- \frac{2
  (1-o(1))N^{1/3}}{3 \log_2^4 N}
   \right). \label{eq:finalProbability}  \end{align}
Recall that $X_{d_1}/\N{d-1}=1/2$ since the arborescence is binary up to depth $d_1$. Hence, if $\cA{d}$ is true for all $d\leq d^*$, we have
$$\left|\frac{X_{d}}{\N{d}}-\frac{1}{2} \right|\leq \sum_{h=d_1}^{d-1}\epsilon_{h}\leq
\frac{d^*}{\log_2^{2+c/2}N}=o(1).$$
Hence, for any $\epsilon>$, there exists $N'$ such that for $N>N'$, $\left|\frac{X_{d}}{\N{d}}-\frac{1}{2} \right|\leq\epsilon$ for all $d\leq d^*$ and with the probability  in (\ref{eq:finalProbability}), which is the result of this proposition.

\subsection{Proof of Proposition~\ref{prop:martingale}}
\label{append:prop:martingale}
To consider a probability conditioned on $\Sd{0}, \Sd{1}, \cdots, \Sd{h}$, and $(\SC{h,1}, \SC{h,2})$, we first fix these variables. With these fixed values, $\Sd{h+1,m}$ is a hyper geometric random variable with $(\SC{h-1}, \SC{h,m}, \SC{h})$ by Lemma~\ref{lemma:hyperS}. Hence, the mean of $\Sd{h+1,m}$ is given by
$$E[\Sd{h+1,m}]=\frac{\SC{h,m} \SC{h}}{\SC{h-1}}\;\;\mbox{ for }m=1,2.$$
Thus, summing the means of $\Sd{h+1,1}$ and $\Sd{h+1,2}$, we have
\begin{align}
&E[\Sd{h+1}]=\frac{\Sd{h}(\SC{h,1}+\SC{h,2})}{\SC{h-1}}=\frac{\Sd{h}\SC{h}}{\SC{h-1}}.\NN
\end{align}
Since $\SC{h+1}=\SC{h}-\Sd{h+1}$, we can conclude
\begin{align}
E[\SC{h+1}]= \SC{h}-E[\Sd{h+1}] = \SC{h}(1-\frac{\Sd{h}}{\SC{h-1}})=\frac{\SC{h}^2}{\SC{h-1}}.\NN
\end{align}
If we divide both sides by $\SC{h}$, we have the following result:
$$E[\gamma_{h+1}|\Sd{0}, \Sd{1}, \cdots, \Sd{h}, \SC{h,1}, \SC{h,2}]=\gamma_{h}.$$
Since $\Sd{0}, \Sd{1}, \cdots, \Sd{h}$, and $(\SC{h,1}, \SC{h,2})$ fully determines $\gamma_1,\cdots, \gamma_h$, we can conclude the result in the proposition.

\subsection{Proof of Proposition~\ref{prop:concentrationOnContraction}}
\label{append:prop:concentrationOnContraction}

By definition of $\gamma_h$, we can derive the following:
\begin{align}
&\gamma_{h+1} > \gamma_h+\epsilon\NN\\
\Leftrightarrow&\frac{\SC{h}-\Sd{h+1}}{\SC{h}}> \frac{\SC{h}}{\SC{h-1}}+\epsilon\NN\\
\Leftrightarrow&\SC{h}-\Sd{h+1}>
\SC{h}
\left(  1-\frac{\Sd{h}}{\SC{h-1}}\right)+\epsilon \SC{h}  \NN\\
\Leftrightarrow & \Sd{h+1}<E[\Sd{h+1}]-\epsilon \SC{h}\NN\\
\Rightarrow&\Sd{h+1,m}<E[\Sd{h+1,m}]-\frac{\epsilon \SC{h}}{2}\;\;\mbox{ for some }m=1,2\NN
\end{align}
Using the union bound, we can rewrite the probability of $\gamma_{h+1} > \gamma_h+\epsilon$ as follows:
\begin{align}
&\prob{\gamma_{h+1} > \gamma_h+\epsilon}\leq \sum_{m=1}^2 \prob{\Sd{h+1,m}<E[\Sd{h+1,m}]-\frac{\epsilon \SC{h}}{2}}.\label{eq:probGamma}
\end{align}
Since $\Sd{h+1,m}$ is a hyper geometric random variable with $(\SC{h-1}, \SC{h,m}, \Sd{h})$, we can derive the following bound from \cite[page 98]{Dubhashi2009}
\tred{\begin{align}
&\prob{\Sd{h+1,m}<E[\Sd{h+1,m}]-\frac{\epsilon \SC{h}}{2}}\NN\\
&\leq \exp\left(-\frac{2 (\SC{h-1}-1)(\epsilon\SC{h}/2)^2}{(\SC{h-1}-\Sd{h})(\Sd{h}-1)}  \right)\NN\\
&\leq \exp\left(-\frac{ \epsilon^2\SC{h-1}\SC{h}^2}{2\SC{h}\Sd{h}}  \right)\NN\\
&\leq \exp\left(-\frac{ \epsilon^2\SC{h}^2}{2\Sd{h}} \right)
\leq \exp\left(-\frac{ \epsilon^2\min(\SC{h,1},\SC{h,2})^2}{\Sd{h}}
 \right).\NN
\end{align}
Applying the above to (\ref{eq:probGamma}), we obtain the result of this proposition. We can also rewrite this probability as
\begin{align}
&\prob{\Sd{h+1,m}<E[\Sd{h+1,m}]-\frac{\epsilon \SC{h}}{2}}\NN\\
&\leq \exp\left(-\frac{ \epsilon^2\SC{h-1}\SC{h}^2}{2\SC{h}\Sd{h}}  \right)\leq \exp\left(-\frac{ \epsilon^2\SC{h}}{2}  \right),\label{eq:mamma}
\end{align}
which will be used later.
}
\tblue{\begin{align}
&\prob{\Sd{h+1,m}<E[\Sd{h+1,m}]-\frac{\epsilon \SC{h}}{2}}\NN\\
&\leq \exp\left(-\frac{2 (\SC{h-1}-1)(\epsilon\SC{h}/2)^2}{(\SC{h-1}-\Sd{h})(\Sd{h}-1)}  \right)\NN\\
&\leq \exp\left(-\frac{ \epsilon^2\SC{h-1}\SC{h}^2}{2\SC{h}\Sd{h}}  \right)\leq \exp\left(-\frac{ \epsilon^2\SC{h}}{2}  \right).\NN
\end{align}
Applying the above to (\ref{eq:probGamma}), we obtain the result of this proposition.}

\subsection{Proof of Proposition~\ref{prop:CoverageToRemainginPeers}}
\label{append:prop:CoverageToRemainginPeers}

To prove this proposition, we define the following events:
\begin{align}
&\cA{h}=\{ \gamma_h \leq  \gamma_{h-1}+N^{-1/4} \}\mbox{ and }
\cB{h}=\{ h^*=h\}.\NN
\end{align}
For $h_1=\lfloor (1+\epsilon)\log N \rfloor$,
The probability $\prob{h^*\leq h_1 }$ can be expressed as follows, which we will explain afterwards:
\begin{align}
&\prob{h^*\leq h_1} = \prob{\cup_{h=1}^{h_1} \cB{h}}= \sum_{h=1}^{h_1}\prob{\cB{h}, \cB{l}^C, \forall 0<l<h}\NN\\
&\geq\sum_{h=1}^{h_1}\prob{\cA{h},\cB{h}, \cA{l},\cB{l}^C, \forall 0<l<h}\NN\\
&\geq \prod_{h=1}^{h_1}\prob{\cA{h}\;|\;\cA{l}, \cB{l}^C, \forall l<h}-\prob{\cA{l},\cB{l}^C, \forall l\leq h_1}\label{eq:iterative}\\
&=\prod_{h=1}^{h_1}\prob{\cA{h}\;|\;\cA{l}, \cB{l}^C, \forall l<h}.\label{eq:iterative2}
\end{align}

We first show that the inequality in (\ref{eq:iterative}) holds using induction. For $h_1=1$, (\ref{eq:iterative}) is satisfied because
$$\prob{\cA{1},\cB{1}}=\prob{\cA{1}}-\prob{\cA{1},\cB{1}^C}.$$
We now suppose (\ref{eq:iterative}) is satisfied for $h_1=h'-1$. Using this induction hypothesis, we can derive
\begin{align}
&\prod_{h=1}^{h'}\prob{\cA{h}\;|\;\cA{l}, \cB{l}^C, \forall l<h}\NN\\
\leq&\Bigg[
	\sum_{h=1}^{h'-1}\prob{\cA{h},\cB{h}, \cA{l},\cB{l}^C, \forall 0<l<h}\;\NN\\
&\;\;+ \prob{ \cA{l},\cB{l}^C, \forall l\leq h'-1}    \Bigg]\cdot\prob{\cA{h'}\;|\;\cA{l}, \cB{l}^C, \forall l<h'}\NN\\
\leq &
	\sum_{h=1}^{h'-1}\prob{\cA{h},\cB{h}, \cA{l},\cB{l}^C, \forall 0<l<h}\NN\\
&+	\prob{\cA{h'}, \cA{l},\cB{l}^C, \forall 0<l<h'}\NN\\
=& \sum_{h=1}^{h'}\prob{\cA{h},\cB{h}, \cA{l},\cB{l}^C, \forall 0<l<h}\NN\\
&\;\;+	\prob{\cA{h'}, \cB{h'}^C, \cA{l},\cB{l}^C, \forall 0<l<h'},\NN
\end{align}
which implies that (\ref{eq:iterative}) is also satisfied for $h_1=h'$. By induction, (\ref{eq:iterative}) is satisfied for all $h$.

We next show that (\ref{eq:iterative2}) is true, i.e., the second term in (\ref{eq:iterative}) is zero for sufficiently large $N$. Since $\cA{l}$ is satisfied for all $l\leq h_1$, we have $\gamma_h\leq \gamma_{h-1}+N^{-1/4}$ for all $h\leq h_1$. This implies that  for any $\epsilon'>0$,
\begin{align}
\gamma_h& \leq\gamma_1+\frac{h-1}{N^{1/3}}\leq \gamma_1+\frac{h_1}{N^{1/4}}\leq 1-\frac{\Sd{1}}{N-\Sd{0}}+\frac{h_1}{N^{1/4}}\NN\\
&\leq 1-\frac{1}{\log^c N}+\frac{h_1}{N^{1/4}}\leq 1-\frac{1}{(1+\epsilon')\log^c N}\NN
\end{align}
for sufficiently large $N$.
From (\ref{eq:remainingPeers}), we have
\begin{align}
\SC{h_1}&=\SC{0}\prod_{h=1}^{h_1}\gamma_h \leq
N\left( 1-\frac{1}{(1+\epsilon')\log^c N} \right)^{(1+\epsilon)\log N}
\NN\\
&=N\left[\left( 1-\frac{1}{(1+\epsilon')\log^c N} \right)^{(1+\epsilon)\log^c N}\right]^{\log^{1-c}N}\label{eq:sufficientLargeN}
\end{align}
for sufficiently large $N$. Note that if we take $\epsilon'<\epsilon$, the term in the bracket in (\ref{eq:sufficientLargeN}) is upper bounded by $e^{-1}$ for sufficiently large $N$. Hence, if $\cA{l}$ is true for all $l\leq h_1$, then $\SC{h_1}\leq N/\exp(\log^{1-c}N)$, which implies one of $\cB{1},\cB{2},\cdots, \cB{h^*}$ is true for large $N$. Hence, the second term in (\ref{eq:iterative}) is zero.

We finally expand the probability in (\ref{eq:iterative2}).
From \tblue{Proposition~\ref{prop:concentrationOnContraction},}
\tred{(\ref{eq:mamma}),} we have find the probability of $\cA{h}$ conditioned on $\Sd{0},\Sd{1},\cdots, \Sd{h-1}, \SC{h,1},$ and $\SC{h,2}$ by replacing $\epsilon=N^{-1/4}$.
Since $(\Sd{0},\Sd{1},\cdots, \Sd{h-1})$ fully determines whether $\cA{l}$ and $\cB{l}$ are true or not for $l<h$, we have
\begin{align}
\prob{\cA{h}^C\;|\;\cA{l}, \cB{l}^C, \forall l<h}\leq 2\exp\left(-\frac{N^{1/2} }{2\exp(\log^{1-c}N)}\right),\label{eq:applyThisFinal}
\end{align}
where the last inequality has been obtained using $\SC{h}>N/\exp(\log^{1-c}N)$ when $\cB{l}^C$ is true for $l<h$. Applying (\ref{eq:applyThisFinal}) to (\ref{eq:concentration2}), we can conclude
$$\prob{h^*\leq h_1}\geq 1- 2(1+\epsilon)\log N\exp\left(-\frac{N^{1/2} }{2\exp(\log^{1-c}N)}\right).$$

\bibliographystyle{IEEEtran}
\bibliography{joohwan}

\end{document}